\def\BibTeX{{\rm B\kern-.05em{\sc i\kern-.025em b}\kern-.08em
    T\kern-.1667em\lower.7ex\hbox{E}\kern-.125emX}}
\newif\ifsupp 
\newif\ifmain
\newtheorem{problem}{Problem}
\newtheorem{theorem}{Theorem}
\newtheorem{lemma}[theorem]{Lemma}
\newtheorem{corollary}[theorem]{Corollary}
\begin{document}

\title{
Coresets remembered and items forgotten:\\
submodular maximization with deletions
}

\author{\IEEEauthorblockN{Guangyi Zhang}
\IEEEauthorblockA{KTH Royal Institute of Technology \\
\textit{Division of Theoretical Computer Science} \\
Sweden \\
\href{mailto:guaz@kth.se}{guaz@kth.se}}
\and
\IEEEauthorblockN{Nikolaj Tatti}
\IEEEauthorblockA{HIIT, University of Helsinki \\
\textit{Department of Computer Science} \\
Finland \\
\href{mailto:nikolaj.tatti@helsinki.fi}{nikolaj.tatti@helsinki.fi}}
\and
\IEEEauthorblockN{Aristides Gionis}
\IEEEauthorblockA{KTH Royal Institute of Technology \\
\textit{Division of Theoretical Computer Science} \\
Sweden \\
\href{mailto:argioni@kth.se}{argioni@kth.se}}
}

\maketitle

\ifmain 
\begin{abstract}
In recent years we have witnessed an increase on the development of methods for submodular optimization, 
which have been motivated by the wide applicability of submodular functions in real-world data-science problems.
In this paper, we contribute to this line of work by considering the problem of 
\emph{robust submodular maximization against unexpected deletions},
which may occur due to privacy issues or user preferences.
Specifically, we consider the minimum number of items an algorithm has to remember,
in order to achieve a non-trivial approximation guarantee 
against adversarial deletion of up to \nD items.
We refer to the set of items that an algorithm has to keep before adversarial deletions 
as a \emph{deletion-robust coreset}.

Our theoretical contributions are two-fold.
First, we propose a \emph{single-pass streaming algorithm} that 
yields a $(1-2\Cgap)/(4\nm)$-approximation for maximizing a non-decreasing submodular function under a general \nm-matroid constraint and 
requires a coreset of size $\nI + \nD/\Cgap$, 
where \nI is the maximum size of a feasible solution.
To the best of our knowledge,
this is the first work to achieve an (asymptotically) \emph{optimal} coreset,
as no constant-\-factor approximation is possible with a coreset of size sublinear in $\nD$.
Second, we devise an effective offline algorithm that guarantees stronger approximation ratios with a coreset of size $\bigO(\nD \log(\nI)/\Cgap)$.
We also demonstrate the superior empirical performance of the proposed algorithms in real-life applications.
\end{abstract}

\begin{IEEEkeywords}
robust optimization, submodular maximization, streaming algorithms, approximation algorithms
\end{IEEEkeywords}

\section{Introduction}
\label{section:intro}

Submodular maximization has attracted much attention in the data-science community in recent years.
Its popularity is due to the ubiquity of the ``diminishing-returns'' property in different problem settings
and the rich toolbox that has been developed during the past decades~\citep{krause2014submodular}.
The problem of maximizing a non-decreasing submodular function can be used to cast a wide
range of applications, including
viral marketing in social networks \citep{kempe2015maximizing},
data subset selection \citep{wei2015submodularity}, and
document summarization \citep{lin2011class}.

However, in a world full of uncertainty, 
a pre-computed high-quality solution may cease to be feasible due to unexpected deletions.

As an example, consider an application in movie recommendation, 
where we ask to select a subset of movies to recommend to a user so as to maximize a certain
non-decreasing submodular utility function.
It is possible that the user has already seen some or all movies in the recommended set, 
and the rest are insufficient for providing a good-quality recommendation.
Such unexpected deletions may happen in other scenarios, for example, 
a user may exercise their ``right to be forgotten'', specified by EU's 
General Data Protection Regulation (GDPR)~\citep{voigt2017eu}, 
and request at any point certain data to be removed. 

Instead of re-running the recommendation algorithm over the dataset excluding the deleted items,
which typically is time-consuming, 
a better way to handle unexpected deletions is to extract a small and robust \emph{coreset} of the dataset, 
from which we can quickly select a new solution set.
To quantify robustness, we require the coreset to be robust against adversarial deletions up to \nD items.
Naturally, the coreset size has to depend on the number of deletions \nD.

Different adversarial models have been studied in the literature. 
For the most powerful adversary, called an \emph{adaptive} adversary, 
we assume that the coreset is known to the adversary and deletions occur in a worst-case manner.
Such malicious deletions lead to weak quality guarantees \citep{mitrovic2017streaming}, 
or require a larger coreset and longer running time \citep{mirzasoleiman2017deletion}.

In many applications, though, 
deletions are typically more benign and may only mildly corrupt the coreset.
For example, in the movie-recommendation scenario, 
the probability that a user has watched a movie may depend on the popularity of the movie, 
and be independent on whether the movie has been added in a coreset.
An adversary who is oblivious to the contents of the coreset is called a \emph{static} adversary.

In this paper, we derive bounds for the smallest possible coreset size that suffices to provide 
a non-trivial quality guarantee against item deletions by a static adversary.
It is known that no constant-factor approximation is possible with a coreset whose size is sublinear in $\nD$ \citep{dutting2022deletion}.
We assume that the adversary has unlimited computational power, 
and knows everything about the algorithm except for its random bits.

More concretely, given a non-decreasing submodular function, we study the problem 
\emph{Robust Coreset for submodular maximization under Cardinality constraint} 
(\rcc) against a static adversary.
In addition, we study the generalization of \rcc over a more general \nm-matroid constraint (\rcp), 
or a \nm-system constraint (\rcpsystem).
Recall that a cardinality constraint is known as a uniform matroid.
Throughout the paper, we are mostly interested in the more challenging case 
where $\nI = \bigO(\nD)$ and \nI is the maximum size of a feasible solution.

\begin{table*}[t]
\caption{A summary of existing results on robust coresets.
For simplicity, it is assumed that $\nI = \bigO(\nD)$.}
\label{tbl:results}
\small
\centering
\begin{tabular}{lccccc}
\toprule
	&\makecell{Adversarial\\model}	&\makecell{Constraint\\type}	&\makecell{Approximation\\factor} 	&\makecell{Coreset\\size}	&\makecell{Streaming/\\Offline}\\
\midrule
\citet{mitrovic2017streaming}	&adaptive	&cardinality	&$\approx0.149$	&$\bigO(\nD\log^3(\nI)/\Cgap)$	&S\\
\citet{mirzasoleiman2017deletion}; \citet{badanidiyuru2014streaming}	&adaptive	&cardinality	&$(1-\Cgap)/2$	&$\bigO(\nD\nI/\Cgap)$	&S\\
\citet{mirzasoleiman2017deletion}; \citet{chakrabarti2015submodular}	&adaptive	&\nm-matroids	&$1/(4\nm)$	&$\bigO(\nD\nI)$	&S\\
\citet{kazemi2018scalable}	&static	&cardinality	&$(1-\Cgap)/2$	&$\bigO(\nD\log^2(\nI)/\Cgap^3)$	&S\\
\rowcolor{gray!25}
This paper	&static	&cardinality	&$(1-\Cgap)/2$	&$\bigO(\nD \log(\nI)/\Cgap^2)$	&S\\
\citet{dutting2022deletion}	&static	&matroid	&$\frac{1-\Cgap}{e/(e-1) + 4 + \bigO(\Cgap)}$	&$\bigO(\nD\log(\nI/\Cgap)/\Cgap^2)$	&S\\
\rowcolor{gray!25}
This paper	&static	&\nm-matroids	&$(1-2\Cgap)/(4\nm)$	&$\mathbf{\nI+\nD/\Cgap}$	&S\\
\midrule
\citet{feldman2020one}	&adaptive	&cardinality	&$0.514$	&$\bigO(\nD\nI)$	&O\\
\citet{kazemi2018scalable}	&static	&cardinality	&$(1-\Cgap)/2$	&$\bigO(\nD\log(\nI)/\Cgap^2)$	&O\\
\rowcolor{gray!25}
This paper	&static	&cardinality	&$(1-\Cgap)/2$	&$\bigO(\nD \log(\nI)/\Cgap)$	&O\\
\citet{dutting2022deletion}	&static	&matroid	&$\frac{1-\Cgap}{e/(e-1) + 2 + \bigO(\Cgap)}$	&$\bigO(\nD\log(\nI/\Cgap)/\Cgap^2)$	&O\\ 
\rowcolor{gray!25}
This paper	&static	&matroid	&\cleanfactorrcm	&$\bigO(\nD \log(\nI)/\Cgap)$	&O\\
\rowcolor{gray!25}
This paper	&static	&\nm-system	&\cleanfactorrcp	&$\bigO(\nD \log(\nI)/\Cgap)$	&O\\
\bottomrule
\end{tabular}
\end{table*}

Our contributions are summarized as follows.
\begin{itemize}
	\item We offer a randomized $\frac{1-2\Cgap}{4\nm}$-approximation, single-pass streaming algorithm for the \rcp problem, with a coreset of size $\nI + \nD/\Cgap$.
	The coreset size is asymptotically optimal.
	Prior to our work, the best-known coreset size is $\bigO(\nD \log(\nI)/\Cgap^2)$ 
	(more details about related work are shown in Table~\ref{tbl:results} and discussed 
	in Section~\ref{section:related}).
		
	\item In addition, 
	we introduce and analyze a natural greedy algorithm, 
	which keeps multiple backups for each selected item.
	We show that this algorithm offers stronger approximation ratios at the expense of a larger coreset size.
	Specifically, we devise an offline algorithm that requires a coreset of size 
	$\bigO(\nD \log(\nI)/\Cgap)$ and yields
	$\frac{1-\Cgap}{2}$ and 
	\cleanfactorrcp
	approximation for \rcc and \rcpsystem, respectively.
	Besides, the greedy algorithm is empirically effective even against an adaptive adversary.
	
	\item The proposed algorithms are evaluated empirically and are shown to 
	achieve superior performance in many application scenarios.
\end{itemize}

Our techniques can be extended to obtain a $(1-\Cgap)/2$-approximation one-pass streaming algorithm 
for the \rcc problem with a coreset of size $\bigO(\nD \log(\nI)/\Cgap^2)$, 
an improvement over $\bigO(\nD \log(\nI)^2/\Cgap^3)$ in \citet{kazemi2018scalable}.
Our approximation ratio for the \rcp problem can be further strengthened to \cleanfactorrcm when the constraint is a single matroid.
A summary of the existing results is displayed in Table \ref{tbl:results}.
Our implementation can be found at \code.

The rest of the paper is organized as follows.
We formally define the robust coreset problem in Section~\ref{section:definition}.
Related work is discussed in Section~\ref{section:related}.
The proposed streaming and offline algorithms are presented and analyzed in Sections~\ref{section:streaming} and~\ref{section:offline}, respectively.
Our empirical evaluation is conducted in Section~\ref{section:experiment},
followed by a short conclusion in Section~\ref{section:conclusion}.

\section{Problem definition}
\label{section:definition}

In this section we define the concept of \emph{robust coreset} that we consider in this paper.
Before discussing the concept of coreset and formally define the problem we study, 
we briefly review the definitions of submodularity, \nm-matroid, and \nm-system.

\spara{Submodularity.}
Given a set \V, a function $f: 2^\V \to \reals_+$ is called \emph{submodular} if 
for any $X \subseteq Y \subseteq \V$ and $v \in \V \setminus Y$, it holds $f(v \mid Y) \le f(v \mid X)$,
where $f(v \mid Y) = f(Y+v) - f(Y)$ is the \emph{marginal gain} of $v$ with respect to set $Y$.
Function $f$ is called \emph{non-decreasing} if 
for any $X \subseteq Y \subseteq \V$, it holds $f(Y) \ge f(X)$.
Without loss of generality, we can assume that the function $f$ is normalized, i.e., $f(\emptyset)=0$.

\spara{\nm-matroid.}
For a set \V,
a family of subsets $\M \subseteq 2^\V$ is called \emph{matroid} if it satisfies the following two conditions:
(1)~downward closeness: if $X \subseteq Y$ and $Y \in \M$, then $X \in \M$;
(2)~augmentation: if $X,Y \in \M$ and $|X| < |Y|$, then $X + v \in \M$ for some $v \in Y \setminus X$.
For a constant~\nm, 
a \emph{\nm-matroid} $\M \subseteq 2^\V$ is defined as the intersection 
of \nm matroids $\{\M_j\}_{j \in [\nm]}$.
The rank of a \nm-matroid \M is defined as $\nI = \max_{S \in \M} |S|$.

\spara{\nm-system.}
For a set \V and a constant \nm, a \nm-system $\M \subseteq 2^\V$ is defined as follows.
Given a set $Y \subseteq \V$, a set $X$ is called a \emph{base} of $Y$ if $X$ is a maximal subset of $Y$, i.e.,
$X \in \M$, $X \subseteq Y$ and $X+v \notin \M$ for any $v \in Y \setminus X$.
We denote the set of bases of $Y$ by $\B(Y)$.
A tuple $(\V,\M)$ forms a \nm-system if for any $Y \subseteq \V$, it is 
$\frac{\max_{X \in \B(Y)} |X|}{\min_{X \in \B(Y)} |X|} \le \nm$.
A \nm-matroid is a special case of a \nm-system.

\spara{Robust coreset.}
We consider a set \V, 
a non-decreasing submodular function $f: 2^\V \to \reals_+$,
an integer $\nD$, and
a \nm-matroid $\M \subseteq 2^\V$ of rank $\nI$.
A procedure for selecting a solution subset for $f$, 
which is robust under deletions, is specified by the following three stages.
\begin{enumerate}
	\item Upon receiving all items in \V, an algorithm $\alg_1$ returns a small subset $\R \subseteq \V$ as the coreset.
	\item A static adversary deletes a subset $\D \subseteq \V$ of size at most~\nD.
	\item An algorithm $\alg_2$ extracts a feasible solution $\I \subseteq \R \setminus \D$ and $\I \in \M$.
\end{enumerate}
The algorithms $\alg_1$ and $\alg_2$ can be randomized.
We call an adversary \emph{static} if the adversary is unaware of the random bits used by the algorithm.
Thus, one can assume the set of deletions \D are fixed before the algorithm is run.
The quality of the solution $\ALG = \I$ is measured by evaluating the function $f$ on the solution set \ALG.
We aim that the quality $f(\ALG)$ is close to the optimum after deletion, 
$f(\OPT(\D))$, where $\OPT(\D) = \arg\max_{S \subseteq \V \setminus \D, S \in \M} f(S)$.
We omit \D in $\OPT(\D)$ when it is clear from the context.
We say that a pair of randomized algorithms $(\alg_1,\alg_2)$ yield a \emph{$(\Capx,\nR)$-coreset} if
\begin{equation}
\label{eq:coreset}
\expect[f(\alg_2(\ret(\alg_1), \D))] \ge \Capx f(\OPT)
\ \ \text{and} \ \
|\R(\alg_1)| \le \nR,
\end{equation}
where algorithm $\alg_1$ returns a tuple of sets $\ret(\alg_1) = \{S_i\}$, and
$\R(\alg_1) = \cup_i S_i$ is the coreset.

\begin{problem}[Robust coreset for submodular maximization under \nm-matroid constraint (\rcp)]
Given
a set~\V, 
a non-decreasing submodular function $f: 2^\V \to \reals_+$,
a \nm-matroid~\M of rank~\nI, and
an unknown set~\D,
find an $(\Capx,\nR)$-coreset \R.
\label{problem:rcp}
\end{problem}
Notice that there is a trade-off between the approximation ratio~$\Capx$ 
and the coreset size \nR.
We typically aim for \nR to be independent of $|\V|$ and grow slowly in \nD.
Note that even in the case $\D = \emptyset$, i.e., no deletions,
extracting an optimal solution from the items of \V for the \rcp problem is an intractable problem.
In fact, no polynomial-time algorithms can approximate $f(\OPT)$ with a factor 
better than $1-1/e$~in offline computation~\citep{nemhauser1978best}, 
or better than $1/2$ in a single pass~\citep{feldman2020one}.

We refer to Problem~\ref{eq:coreset} with 
cardinality constraint, 
single matroid constraint, and
\nm-system constraint, 
as \rcc, \rcm, and \rcpsystem, respectively.

\section{Related work}
\label{section:related}

\para{Deletion-robust submodular maximization.}
For simplicity, we assume $\nI = \bigO(\nD)$.
\citet{mitrovic2017streaming} study robust submodular maximization against an adaptive adversary who can inspect the coreset before deletion.
They give a one-pass constant-approximation algorithm for the \rcc problem 
with a coreset of size  $\bigO(\nD\log^3(\nI)/\Cgap)$.
\citet{mirzasoleiman2017deletion} provide another simple and flexible algorithm, 
which sequentially constructs $\nD+1$ solutions by running any existing streaming algorithm.
This gives an $1/(4\nm)$-approximation algorithm 
for the \rcp problem in a single pass at the expense of a coreset of larger size $\bigO(\nD \nI)$.
When an offline coreset procedure is allowed, \citet{feldman2020one} propose a 0.514-approximation algorithm for the \rcc problem with a coreset of size $\bigO(\nD \nI)$ by using a two-player protocol.

For a static adversary, \citet{kazemi2018scalable} achieve $(1-\Cgap)/2$ approximation for the \rcc problem with coreset size 
$\bigO(\nD\log(\nI)/\Cgap^2)$ and
$\bigO(\nD\log^2(\nI)/\Cgap^3)$
for offline and one-pass streaming settings, respectively.
Very recently, \citet{dutting2022deletion} generalize the work of \citet{kazemi2018scalable} into a matroid constraint, requiring a coreset size of $\bigO(\nD\log(\nI/\Cgap)/\Cgap^2)$.

Compared to prior work, our algorithm achieves the best-known coreset size against a static adversary.

\spara{Max-min robust submodular maximization.}
Another line of work on robust submodular maximization studies a different notion of robustness, where adversarial deletions are performed directly on the \emph{solution} and no further updates to the solution are allowed \citep{krause2008robust,orlin2018robust,bogunovic2017robust}

\spara{Dynamic submodular maximization.}
The input in the dynamic model consists of a stream of updates, 
which could be either an insertion or a deletion of an item.
It is similar to the \rcc setting if all deletions arrive at the end of the stream.
However, the focus in the dynamic model is time complexity instead of space complexity.
Methods aim to maintain a good-quality solution at any time with a small amortized update time \citep{lattanzi2020fully,monemizadeh2020dynamic,chen2021complexity}.

\spara{Submodular maximization.}
The first single-pass streaming algorithm for a cardinality constraint proposed by \citet{badanidiyuru2014streaming} relies on a thresholding technique.
This simple technique turns out to yield tight 1/2-approximation unless the memory depends on \nV \citep{feldman2020one,norouzi2018beyond}.
The memory requirement is later improved from $\bigO(\nI \log (\nI)/\Cgap)$ to $\bigO(\nI/\Cgap)$~\citep{kazemi2019submodular}.
For a general \nm-matroid constraint, $1/4\nm$-approximation has been known \citep{chekuri2015streaming,chakrabarti2015submodular}.

In the offline setting, 
it is well-known that a simple greedy algorithm achieves optimal $1-1/e$ 
approximation for a cardinality constraint \citep{nemhauser1978best,nemhauser1978analysis}. 
A modified greedy algorithm obtains $1/(\nm+1)$-approximation for a general \nm-system constraint~\citep{fisher1978analysis,calinescu2011maximizing}.
More sophisticated algorithms with tight $(1-1/e)$-approximation for a matroid appeared later \citep{calinescu2011maximizing,filmus2012tight}.

\section{The proposed streaming algorithm}
\label{section:streaming}

In this section, we first describe a non-robust streaming algorithm \exc, 
and then introduce a novel method that enhances it for the \rcp problem.
We also discuss an improved streaming algorithm for the simpler \rcc problem.

\citet{chakrabarti2015submodular} provide a simple $1/(4\nm)$-approximation streaming algorithm for non-decreasing submodular maximization under a \nm-matroid constraint.
We call their algorithm \exc because it maintains one feasible solution at all time by exchanging cheap items $\Sw$ in the current solution \I,  
for any new valuable item $v$ that cannot be added in \I without making the solution infeasible.
The \exc algorithm measures the value of an item by a weight function $\val: \V \to \reals_+$, 
which is defined as $\val(v) = f(v \mid \I_v)$, 
where $\I_v$ is the feasible solution before processing item $v$.
Furthermore, the algorithm measures the value of a subset
by extending \val with $\val(S) = \sum_{u \in S} \val(u)$.
The algorithm replaces $\Sw$ with $v$ in $\I$ when $\val(v) \ge (1+\Cexc) \val(\Sw)$,
for a 
parameter~\Cexc.
We restate the \exc algorithm of \citet{chakrabarti2015submodular} in Algorithm \ref{alg-exc-nonrobust} 
and their main result in Theorem~\ref{theorem:exc-nonrobust}.
We stress that the \exc algorithm is non-robust and Theorem~\ref{theorem:exc-nonrobust} holds only in the absence of deletions \D.

\begin{algorithm}[t]
\DontPrintSemicolon
\KwIn{parameter \Cexc}
\SetKwFunction{Fexchange}{Exchange}
$\I \gets \emptyset$\;
\For{$v \in \V$}{
	$\val(v) \gets f(v \mid \I)$\;
	$\Sw \gets \Fexchange(v, \I)$\;
	\If{$\val(v) \ge (1+\Cexc) \val(\Sw)$}{
		$\I \gets \I + v - \Sw$\;
	}
}
\Return{$\I$}\;
\;
\Function{\Fexchange$(v, \I)$}{
	\For{$j \in [\nm]$}{
		\If{$\I + v \not\in \M_j$}{
			$u_j \gets \arg\min_{u \in \I: \I + v - u \in \M_j} \val(u)$
		}
	}
	\Return{$\{ u_j \}_{j \in [\nm]}$}\;
}
\caption{\exc streaming algorithm in Chakrabarti and Kale (2015)}
\label{alg-exc-nonrobust}
\end{algorithm}

\begin{theorem}[\citet{chakrabarti2015submodular}]
\label{theorem:exc-nonrobust}
Suppose Algorithm~\ref{alg-exc-nonrobust} is run over items \V.
For any $\Cexc > 0$ and any feasible solution $S \subseteq \V$ under a \nm-matroid constraint, 
Algorithm~\ref{alg-exc-nonrobust} returns a feasible solution \I that satisfies 
\[
	f(S) \le \Capxv \val(\I) \leq \Capxv f(\I),
\]
where $\Capxv = (\nm(\Cexc+1) - 1) (\Cexc+1) / \Cexc + 1+1/\Cexc$.
In particular, when $\Cexc = 1$, we have
$f(S) \le 4\nm\, \val(\I) \leq 4\nm\,  f(\I)$.
\end{theorem}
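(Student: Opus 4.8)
The plan is to compare $f(S)$ to the on-the-fly weights $\val(\cdot)$ that Algorithm~\ref{alg-exc-nonrobust} assigns, and then to compare those weights to $\val(\I)$ for the returned solution $\I$ (feasibility of $\I$ is immediate from the construction of the exchange subroutine). Write $T$ for the set of all items that are ever inserted into the running solution during the pass, so $\I\subseteq T$; for $v\in T$ the value $\val(v)=f(v\mid\I_v)$ is frozen when $v$ is processed, where $\I_v$ is the running solution just before, and extend $\val$ additively to subsets. First I would bound the total admitted weight: $\val(T)\le\tfrac{1+\Cexc}{\Cexc}\,\val(\I)$. Each item of $T\setminus\I$ is evicted exactly once, always inside the set $\Sw$ removed by an exchange that some item $v$ triggers with $\val(v)\ge(1+\Cexc)\val(\Sw)$; summing $\val(\Sw)\le\tfrac{1}{1+\Cexc}\val(v)$ over all exchanges gives $\val(T\setminus\I)\le\tfrac{1}{1+\Cexc}\val(T)$, which rearranges to the claim.

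Second, I would bound the utility of the returned solution by the total admitted weight: $f(\I)\le\val(T)$. Track the potential $f(\I_t)$ along the pass. A step that inserts an item $v$ — with or without a simultaneous eviction — raises this potential by at most $\val(v)$: a plain insertion raises it by exactly $f(v\mid\I_t)=\val(v)$, and an exchange raises it by $f(\I_t\setminus\Sw+v)-f(\I_t)\le f(\I_t+v)-f(\I_t)=\val(v)$ by monotonicity; a step that only removes items cannot raise it. Telescoping over the pass gives $f(\I)\le\sum_{v\in T}\val(v)=\val(T)$.

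Third — the \emph{technical heart} — I would charge the part of $S$ lying outside $\I$. By monotonicity and submodularity,
\[
f(S)\le f(S\cup\I)\le f(\I)+\sum_{o\in S\setminus\I} f(o\mid\I),
\]
so it remains to bound $\sum_{o\in S\setminus\I}f(o\mid\I)$ by a constant (in $\nm,\Cexc$) times $\val(T)$; to reproduce the stated $\Capxv$ this constant should work out to $\nm(1+\Cexc)-1$. For each constituent matroid $\M_j$, its exchange property yields an injection $\pi_j$ of $S\setminus\I$ into $\I$ with $\I-\pi_j(o)+o\in\M_j$; because the exchange subroutine always discards the cheapest removable element, the blocker $u_j$ that the algorithm considers for $o$ in $\M_j$ satisfies $\val(u_j)\le\val(\pi_j(o))$ whenever $\pi_j(o)$ is present at that moment. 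Combining this with the acceptance test — $\val(o)<(1+\Cexc)\sum_j\val(u_j)$ when $o$ is rejected, and the evicting item having weight at least $(1+\Cexc)\val(o)$ when $o$ is admitted and later evicted — and, crucially, following eviction chains back to an item that survives in $\I$ (a blocker only ever loses its role to a strictly heavier item, so these chains have geometrically growing weight and terminate inside $\I$), I would bound $f(o\mid\I)$ by $(1+\Cexc)$ times weight attributable to $\I$, distributed over the $\nm$ matroids. Summing over $o$ and using that each $\pi_j$ is an injection (so no item of $\I$ is charged more than $\nm$ times overall) yields the required bound.

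Finally, chaining the three estimates,
\[
f(S)\le f(\I)+\sum_{o\in S\setminus\I}f(o\mid\I)\le\val(T)+\bigl(\nm(1+\Cexc)-1\bigr)\val(T)=\nm(1+\Cexc)\,\val(T)\le\frac{\nm(1+\Cexc)^2}{\Cexc}\,\val(\I),
\]
and since $(\nm(\Cexc+1)-1)(\Cexc+1)/\Cexc+1+1/\Cexc$ simplifies to $\nm(1+\Cexc)^2/\Cexc$, this is exactly $\Capxv\,\val(\I)$; taking $\Cexc=1$ gives $f(S)\le 4\nm\,\val(\I)$. I expect the third step to be the real obstacle: making the $\nm$ simultaneous matroid exchanges interact correctly with the cheapest-element eviction rule, and in particular handling items of $S$ that were admitted and later discarded by chasing eviction chains of geometrically increasing weight down to a surviving element of $\I$, without ever inflating the charging multiplicity beyond $\nm$.
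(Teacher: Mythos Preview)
The paper does not prove this theorem; it is quoted from \citet{chakrabarti2015submodular} and used as a black box. So there is no ``paper's proof'' to compare against, and your sketch is essentially a reconstruction of the Chakrabarti--Kale argument. Your steps 1 and 2 and the final algebra (the identity $\Capxv=\nm(1+\Cexc)^2/\Cexc$) are fine. Step 3, however, has a real gap.

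You decompose through the \emph{final} solution $\I$,
\[
f(S)\le f(\I)+\sum_{o\in S\setminus\I} f(o\mid\I),
\]
and then try to control each $f(o\mid\I)$ using the acceptance test. But the acceptance test bounds $\val(o)=f(o\mid\I_o)$, where $\I_o$ is the running solution when $o$ was processed; since $\I_o\not\subseteq\I$ in general (items in $\I_o$ may later be evicted), submodularity does \emph{not} give $f(o\mid\I)\le\val(o)$. A small coverage example makes this concrete: with $k=1$, $\Cexc=1$, items $b,o,a$ arriving in that order, $b$ covering $\{1,2\}$, $o$ covering $\{1,2,7\}$, $a$ covering $\{3,4,5,6\}$, you get $\I_o=\{b\}$, final $\I=\{a\}$, $\val(o)=f(o\mid\{b\})=1$, but $f(o\mid\I)=f(o\mid\{a\})=3$. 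So the rejection inequality for $\val(o)$ says nothing about $f(o\mid\I)$, and your charging in step~3 does not go through as written. The caveat ``whenever $\pi_j(o)$ is present at that moment'' flags exactly this problem but does not resolve it.

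The standard fix (and what Chakrabarti--Kale actually do) is to route the decomposition through $T$ rather than $\I$:
\[
f(S)\le f(S\cup T)\le f(T)+\sum_{o\in S\setminus T} f(o\mid T)\le \val(T)+\sum_{o\in S\setminus T}\val(o),
\]
where the last inequality uses $\I_o\subseteq T$ for every $o$, so submodularity now legitimately gives $f(o\mid T)\le f(o\mid\I_o)=\val(o)$, and $f(T)\le\val(T)$ follows since the items of $T$ in insertion order have $\I_{t_i}\subseteq\{t_1,\dots,t_{i-1}\}$. After this change, items of $S$ that were accepted and later evicted are absorbed into the $\val(T)$ term and need no separate eviction-chain treatment; the charging for rejected $o\in S\setminus T$ then proceeds essentially as you outlined, mapping blockers forward along eviction chains into $\I$ and using injectivity of the matroid exchange maps to cap the multiplicity at $\nm$. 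With that correction the constants assemble to $\nm(1+\Cexc)\,\val(T)\le \nm(1+\Cexc)^2\val(\I)/\Cexc=\Capxv\,\val(\I)$ exactly as in your final display.
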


\medskip
In this paper, we develop a robust extension of the \exc algorithm (\rexc), 
displayed in Algorithms~\ref{alg-exc} and~\ref{alg-exc-after}.
Algorithm~\ref{alg-exc} simply inserts a randomized buffer \C between the data stream and the \exc algorithm, and
Algorithm~\ref{alg-exc-after} continues to process items in $\C \setminus \D$ after deletions and returns a final solution.
Note that Algorithms~\ref{alg-exc} and~\ref{alg-exc-after} can be seen as two stages of the \exc algorithm.
A parameter \Cgap is used to set the size of the buffer \C to $\nD/\Cgap$.
A smaller value of \Cgap and a larger buffer lead to a stronger robust guarantee.

\begin{algorithm}[t]
\DontPrintSemicolon
\KwIn{parameter \Cgap, \Cexc}
$\I \gets \emptyset, \C \gets \emptyset$\;
\For{$v' \in \V$}{
	$\C \gets \C + v'$\;
	\If{$|\C| \ge \nD/\Cgap$}{
		Sample and remove an item $v$ from \C with probability proportional to $1 / f(v \mid \I)$\;
		$\val(v) \gets f(v \mid \I)$\;
		$\Sw \gets \Fexchange(v, \I)$\;
		\If{$\val(v) \ge (1+\Cexc) \val(\Sw)$}{
			$\I \gets \I + v - \Sw$
			\label{step:accept1} \;
		}
	}
}
\Return{\I and \C}\;
\caption{Robust \exc streaming algorithm (\rexc)}
\label{alg-exc}
\end{algorithm}

\begin{algorithm}[t]
\DontPrintSemicolon
\KwIn{\I, \C, \D, and parameter \Cexc}
\For{$v \in \C \setminus \D$}{
	$\val(v) \gets f(v \mid \I)$\;
	$\Sw \gets \Fexchange(v, \I)$\;
	\If{$\val(v) \ge (1+\Cexc) \val(\Sw)$}{
		$\I \gets \I + v - \Sw$
		\label{step:accept2} \;
	}
}
\Return{$\I \setminus \D$}\label{step:return}
\caption{Construction of \rexc solution after deletions}
\label{alg-exc-after}
\end{algorithm}

It is easy to see that Algorithm~\ref{alg-exc} requires a coreset of size at most $\nI + \nD/\Cgap$ and 
at most $\bigO(\nV \nD/\Cgap)$ queries to function $f$, where $\nV = |\V|$.
Besides, it successfully preserves almost the same approximation guarantee as the non-robust \exc algorithm in the presence of adversarial deletions.
\begin{theorem}
\label{theorem:exc}
For any $\Cexc > 0$,
Algorithms~\ref{alg-exc} and~\ref{alg-exc-after} 
yield an approximation guarantee ${\left(1-(1+1/\Cexc)\Cgap\right)}/{\Capxv}$ 
for the \rcp problem
using a coreset of size $\nI + \nD/\Cgap$,
where $\Capxv = (\nm(\Cexc+1) - 1) (\Cexc+1) / \Cexc + 1+1/\Cexc$.
In particular, when $\Cexc = 1$, we obtain a
$\frac{1-2\Cgap}{4\nm}$-approximation guarantee.
\end{theorem}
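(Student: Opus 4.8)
The plan is to reduce the quality analysis to the non-robust guarantee of Theorem~\ref{theorem:exc-nonrobust}, and then charge the value lost to deletions against the randomized sampling rule of Algorithm~\ref{alg-exc}. Since the adversary is static, $\D$ is a fixed set and all expectations below are over the random bits of Algorithm~\ref{alg-exc} (Algorithm~\ref{alg-exc-after} is deterministic). The coreset-size claim is immediate: the solution returned by Algorithm~\ref{alg-exc} lies in $\M$, so it has at most $\nI$ items, and whenever $|\C|$ reaches $\nD/\Cgap$ one item is moved out of $\C$ into $\I$, so $\C$ never holds more than $\nD/\Cgap$ items; hence the coreset $\R$ has size at most $\nI + \nD/\Cgap$.

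For the approximation ratio, let $\I_2$ denote the solution maintained by Algorithm~\ref{alg-exc-after} just before the final pruning in Step~\ref{step:return}, so that $\ALG = \I_2 \setminus \D$ (it is clear that $\ALG \subseteq \R \setminus \D$, and $\ALG \in \M$ since $\M$ is downward closed). The key step is to observe that the combined execution of Algorithms~\ref{alg-exc} and~\ref{alg-exc-after} is a \emph{faithful run of the non-robust \exc algorithm} (Algorithm~\ref{alg-exc-nonrobust}) on the ground set $\V' = \V \setminus (\C \cap \D)$, where $\C$ is the \emph{final} buffer and items are presented in some (partly random) order. Indeed, the filtering step of Algorithm~\ref{alg-exc} removes from $\C$ precisely the items $v$ that fail the exchange test $f(v\mid\I) \ge (1+\Cexc)\val(\Sw)$ at the current $\I$, i.e.\ precisely the items \exc rejects; rejecting, or repeatedly re-testing, such an item leaves $\I$ unchanged, so each of these discards can be replayed by a virtual \exc run at the moment it occurs without disturbing the trajectory of $\I$. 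Every item sampled in Algorithm~\ref{alg-exc} satisfies the exchange inequality when it is processed, since it survived the immediately preceding filtering with respect to the same $\I$, and every item processed in Algorithm~\ref{alg-exc-after} is handled by the exact \exc rule; thus $\I$ evolves exactly as \exc would, and $\I_2$ is this virtual run's final solution. The only items of $\V$ never presented to the virtual run are those that survive in $\C$ to the end and are deleted, i.e.\ $\C \cap \D$. Since $\OPT(\D) \subseteq \V \setminus \D \subseteq \V'$ is feasible, Theorem~\ref{theorem:exc-nonrobust}, which holds for \exc under any input order, applied to the virtual run gives, for every fixing of the random bits,
\[
f(\OPT(\D)) \le \Capxv\, \val(\I_2).
\]
I expect the bookkeeping behind this reduction --- pinning down \emph{when} each filtered-out or re-tested item is replayed, and checking that the random processing order is harmless because Theorem~\ref{theorem:exc-nonrobust} is order-oblivious --- to be the most delicate part.

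It remains to bound the loss $\val(\I_2) - \val(\I_2 \setminus \D) = \val(\I_2 \cap \D)$, where $\val$ is the modular weight function of Theorem~\ref{theorem:exc-nonrobust}, which coincides with the weights assigned along the virtual run. Because Algorithm~\ref{alg-exc-after} only processes items of $\C \setminus \D$, no element of $\D$ is ever inserted into $\I$ during Algorithm~\ref{alg-exc-after}, so $\I_2 \cap \D$ consists solely of deleted items inserted in Step~\ref{step:accept1}, and $\val(\I_2 \cap \D) \le W_\D$, where $W_\D$ is the total $\val$-weight of deleted items ever inserted in Step~\ref{step:accept1}. Fix a sampling step of Algorithm~\ref{alg-exc} and condition on $(\I, \C)$ at that step: since $v$ is drawn with probability proportional to $1/f(v\mid\I)$, the expected weight added equals $|\C|/Z$ with $Z = \sum_{u\in\C} 1/f(u\mid\I)$, whereas the expected weight added on a deleted item equals $|\C\cap\D|/Z$, so their ratio is $|\C\cap\D|/|\C| \le \nD/(\nD/\Cgap) = \Cgap$. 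Summing over all sampling steps and applying the tower rule gives $\expect[W_\D] \le \Cgap\, \expect[W]$, where $W$ is the total $\val$-weight ever inserted during the whole run. Finally, a standard geometric summation over the exchange chains --- each inserted element carries $\val$-weight at least $(1+\Cexc)$ times that of the items it evicts, and every chain ends at an element of $\I_2$ --- yields $W \le (1 + 1/\Cexc)\, \val(\I_2)$ for every fixing of the random bits, hence $\expect[\val(\I_2 \cap \D)] \le (1 + 1/\Cexc)\,\Cgap\, \expect[\val(\I_2)]$.

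Putting the pieces together, using $f(T) \ge \val(T)$ for every $T \subseteq \I_2$ (submodularity, since each weight $\val(u)$ is a marginal gain of $u$ over a set containing all earlier-inserted survivors), linearity of expectation, and the two displayed bounds,
\begin{align*}
\expect[f(\ALG)] = \expect[f(\I_2 \setminus \D)]
&\ge \expect[\val(\I_2)] - \expect[\val(\I_2 \cap \D)] \\
&\ge \bigl(1 - (1+1/\Cexc)\Cgap\bigr)\, \expect[\val(\I_2)] \\
&\ge \frac{1 - (1+1/\Cexc)\Cgap}{\Capxv}\, f(\OPT(\D)),
\end{align*}
where the last step uses $\val(\I_2) \ge f(\OPT(\D))/\Capxv$ and may assume $1 - (1+1/\Cexc)\Cgap \ge 0$ (otherwise the bound is vacuous). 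Setting $\Cexc = 1$ gives $\Capxv = 4\nm$ and the $\frac{1-2\Cgap}{4\nm}$-approximation, and the coreset size was established above; this completes the argument.
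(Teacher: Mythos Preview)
Your proof is correct and follows essentially the same approach as the paper. Both arguments (i) view Algorithms~\ref{alg-exc} and~\ref{alg-exc-after} together as a run of \exc on $\V\setminus(\C\cap\D)$ and invoke Theorem~\ref{theorem:exc-nonrobust}, (ii) use the inverse-weight sampling to show that the expected $\val$-mass falling in $\D$ at each sampling step is at most a $\Cgap$-fraction of the expected mass added, and (iii) invoke the exchange bound $\val(\KI)\le(1+1/\Cexc)\val(\I_2)$ to translate this into the final factor. The only cosmetic difference is that you bound $\expect[\val(\I_2\cap\D)]$ directly (using that no element of $\D$ is inserted during Algorithm~\ref{alg-exc-after}), whereas the paper first bounds $\expect[\val(\KI\cap\D)]$ and then subtracts off the swapped-out part $\K$; the resulting inequality is identical.
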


For the simpler cardinality constraint, we can obtain a tighter approximation ratio at the expense of a larger coreset size $\bigO(\nD \log(\nI)/\Cgap^2)$.
This is an improvement over the state-of-the-art $\bigO(\nD \log^2(\nI)/\Cgap^3)$ in \citet{kazemi2018scalable}.
The main idea is the utilization of importance sampling (Lemma~\ref{lemma:imp}) on top of the robust \sieve algorithm in \citet{kazemi2018scalable}.
We defer the details to Section~\ref{section:streaming-cardinality} in Appendix~\cite{appendix}.
\begin{theorem}
\label{theorem:streaming-cardinality}
There exists a one-pass streaming algorithm that
yields $(1-2\Cgap)/2$ approximation guarantee for the \rcc problem, 
with a coreset size $\bigO((\nD/\Cgap + \nI) \log(\nI)/\Cgap)$.
\end{theorem}

In the rest of this section,
we prove Theorem~\ref{theorem:exc}.

\subsection{Proof of Theorem~\ref{theorem:exc}}

As we mentioned before,
Algorithms~\ref{alg-exc} and~\ref{alg-exc-after} can be seen as two stages of the non-robust \exc algorithm with input $(\V \setminus \C) + (\C \setminus \D)$.
To be more specific, first, 
Algorithm~\ref{alg-exc} finds a solution $S_1$ by running the \exc algorithm with input~$\V \setminus \C$.
Then, Algorithm~\ref{alg-exc} returns solution $S_1$ and buffer \C.
Then, Algorithm~\ref{alg-exc-after} finds a solution $S_2$ by processing 
the items that are preserved in $\C \setminus \D$, while starting from feasible solution~$S_1$.
Finally, Algorithm~\ref{alg-exc-after} returns solution $\ALG = S_2 \setminus \D$.
 
By Theorem~\ref{theorem:exc-nonrobust}
we know that the solution $S_2$ returned by Algorithm~\ref{alg-exc-after} 
before deletions occur is provably good.
This observation is formally stated in the following corollary.

\begin{corollary}
\label{corollary:exc-non-robust}
For any $\Cexc > 0$,
the feasible solution $S_2$ returned by Algorithm~\ref{alg-exc-after},
before the deletion of items in $S_2 \cap \D$, 
satisfies
\[
	f(\rOPT) \le \Capxv \val(S_2),
\]
where $\rOPT \in \M$ is the optimal solution over data $(\V \setminus \C) + (\C \setminus \D)$.
\end{corollary}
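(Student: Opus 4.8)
The plan is to obtain the corollary as an immediate consequence of Theorem~\ref{theorem:exc-nonrobust}, once we have argued that the combined execution of Algorithms~\ref{alg-exc} and~\ref{alg-exc-after} is, step for step, a legitimate run of the non-robust \exc algorithm (Algorithm~\ref{alg-exc-nonrobust}) on the input $W := (\V \setminus \C) + (\C \setminus \D)$ under a suitable arrival order. Granting this, since $\rOPT$ is by definition a feasible solution under the \nm-matroid constraint with $\rOPT \in \M$ and $\rOPT \subseteq W$, applying Theorem~\ref{theorem:exc-nonrobust} to this run with $S := \rOPT$ yields $f(\rOPT) \le \Capxv\, \val(S_2)$, which is exactly the claim. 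Hence the only real work is the reduction.

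To set up the reduction, first note that every item of \V is inserted into \C (the first line of the loop in Algorithm~\ref{alg-exc}), and that an item of $\C \cap \D$ is never processed again: it stays in the final buffer and Algorithm~\ref{alg-exc-after} scans only $\C \setminus \D$. So the items that ever influence the running solution \I and its weights are exactly those of $W = \V \setminus (\C \cap \D)$. I would then assign to each item $v \in \V \setminus \C$ a \emph{processing time}: the moment $v$ leaves \C during Algorithm~\ref{alg-exc}, either because it is dropped by the filtering step (which keeps in \C exactly the items $u$ with $f(u \mid \I) \ge (1+\Cexc)\val(\Sw_u)$, where $\Sw_u$ is the swap set returned for $u$ and \I) or because it is the item sampled in that iteration; to each item of $\C \setminus \D$ I assign its position in the scan of Algorithm~\ref{alg-exc-after}. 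Each such item leaves \C exactly once and never returns, so ordering $W$ by processing time is well defined, and this is the arrival order we feed to \exc.

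It remains to verify that, step by step, the pair $(\I, \val)$ under the combined algorithm evolves exactly as under \exc on this ordering (both starting from $\I = \emptyset$). For an item $v$ dropped by the filter when the running solution is \I, we have $f(v \mid \I) < (1+\Cexc)\val(\Sw)$ with $\Sw$ the corresponding swap set; thus the \exc update sets $\val(v) = f(v \mid \I)$, fails the exchange test, and leaves \I unchanged — precisely what the combined algorithm does in discarding $v$. For the sampled item $v$, it lies in the post-filter buffer, so $f(v \mid \I) \ge (1+\Cexc)\val(\Sw)$; the \exc rule therefore performs $\I \gets \I + v - \Sw$, matching Step~\ref{step:accept1} of Algorithm~\ref{alg-exc} verbatim, and Algorithm~\ref{alg-exc-after} is literally the body of the \exc loop. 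Since \I is never altered when an item is rejected, the order among several items evicted by a single filtering step (and the placement of the sampled item relative to them) is immaterial. Consequently the final \I of Algorithm~\ref{alg-exc-after} equals the output $S_2$ of the \exc run on $W$ and the weights agree, so Theorem~\ref{theorem:exc-nonrobust} gives $f(\rOPT) \le \Capxv\, \val(S_2)$.

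The main obstacle is the bookkeeping in this reduction, and specifically two points: an item may linger in the buffer and be rejected (or sampled) against a solution \I different from the one present when it entered the stream, and the filtering step may evict a whole batch of items at once. Both are handled by the single observation that a rejection leaves \I unchanged, so it is harmless to defer an item's \exc-processing to its eviction time and to break ties among simultaneous evictions arbitrarily. Everything after the reduction is a one-line invocation of Theorem~\ref{theorem:exc-nonrobust}.
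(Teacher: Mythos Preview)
Your proposal is correct and follows essentially the same approach as the paper: the paper does not give a separate proof for this corollary but simply declares it an immediate consequence of Theorem~\ref{theorem:exc-nonrobust} after observing that Algorithms~\ref{alg-exc} and~\ref{alg-exc-after} together constitute a run of \exc on $(\V \setminus \C) + (\C \setminus \D)$. You spell out the reduction in more detail than the paper does (processing times, why batch evictions and deferred rejections are harmless), but the underlying idea is identical.
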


To complete the proof of Theorem~\ref{theorem:exc}, 
we need to show that the solution $S_2$ is robust against deletions, in expectation.
We first derive the expected loss in marginal gain of a sampled item by importance sampling due to the adversarial deletions.
Intuitively, among a candidate set of items with varied marginal gain, 
we need to down\-sample items with larger gain.
Otherwise, the adversary could target those items and we are likely to suffer a great loss.

\begin{lemma}\label{lemma:imp}
Consider sets $\C, S, \D \subseteq \V$. 
Define $\nD = |\D|$.
Let~$v \in \C$ be an item sampled with probability proportional to $1/ f(v \mid S)$.
Then the expected loss in marginal gain of the item $v$ after deleting \D is
$$
\expect\left[ f(v \mid S) \indicator[v \in \D] \right] \leq \frac{\nD}{|\C|} \expect[f(v \mid S)].
$$
\end{lemma}
\begin{proof}
We know
$$\expect[f(v \mid S)] = \sum_{v \in \C} f(v \mid S) p_v = |C|/z,$$
where $p_v = \frac{1/ f(v \mid S)}{z}$ and $z = \sum_{v \in \C} 1/ f(v \mid S)$.
If the sampled item $v$ is in $\D$, we suffer a loss of $f(v \mid S)$, and this happens with probability $p_v$.
Thus, the expected loss is $f(v \mid S) p_v = 1/z$.
That is to say, every item leads to the same amount of expected loss.
The expected loss after any deletion set is
\begin{align*}
&\expect[f(v \mid S) \indicator[v \in \D]] 
= \sum_{v \in \C} f(v \mid S) p_v \indicator[v \in \D] \\
&\quad= \sum_{v \in \C} \indicator[v \in \D]/z 
\leq \nD/z 
= \frac{\nD}{|\C|}\expect[f(v \mid S)]
\end{align*}
proving the claim.
\end{proof}

We proceed to show that solution $S_2$ is robust.
\begin{lemma}
\label{lemma:exc-robust}
For any $\Cexc > 0$,
given the feasible solution $S_2$ found by Algorithm~\ref{alg-exc-after} before removing items in $S_2 \cap \D$, 
we have
\[
	\expect[\val(S_2')] \ge (1-(1+1/\Cexc)\Cgap) \expect[\val(S_2)],
\]
where $S_2' = S_2 \setminus \D$.
\end{lemma}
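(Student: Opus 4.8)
The plan is to bound the expected loss $\expect[\val(S_2)] - \expect[\val(S_2')] = \expect[\val(S_2 \cap \D)]$ by relating it to $\expect[\val(S_2)]$ through a charging argument over the random choices made in Algorithm~\ref{alg-exc}. The key observation is that the only reason an item of $\D$ can end up in $S_2$ is that it survived in the buffer \C long enough and was then either sampled in Algorithm~\ref{alg-exc} (line~\ref{step:accept1}) or accepted later in Algorithm~\ref{alg-exc-after} (line~\ref{step:accept2}); in both cases it was added with value $\val(v) = f(v \mid \I)$ at some point, and the randomness of the sampling step is what we exploit.

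First I would set up the accounting: fix the (deterministic, since the adversary is static) set \D, and track, each time the buffer reaches size $\nD/\Cgap$ and an item $v$ is sampled, the contribution of that step to $S_2$. Because \C always contains at least $\nD/\Cgap$ items that each pass the exchange test $f(v\mid\I)\ge(1+\Cexc)\val(\Fexchange(v,\I))$, and at most \nD of them lie in \D, a sampled item lands in \D only with controlled probability. The crucial point is that items are sampled with probability \emph{inversely} proportional to $f(v\mid\I)$, so the expected $\val$-value contributed by a ``bad'' (deleted) sampled item is small relative to the total $\val$-mass of the buffer: concretely, $\sum_{v\in\C\cap\D} \frac{1/f(v\mid\I)}{\sum_{u\in\C}1/f(u\mid\I)}\cdot f(v\mid\I) = \frac{|\C\cap\D|}{\sum_{u\in\C}1/f(u\mid\I)} \le \frac{\nD}{\sum_{u\in\C}1/f(u\mid\I)}$, and since $|\C|\ge\nD/\Cgap$ this is at most a $\Cgap$-fraction of what one would get by sampling uniformly and reading off $\val$. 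I would then argue that each exchange step increases $\val(\I)$ by a factor tied to $\Cexc$ (the accepted item beats the swapped-out set by a $(1+\Cexc)$ factor), so summing a geometric-type bound over all exchange steps turns the per-step loss bound into the global bound $\expect[\val(S_2\cap\D)] \le (1+1/\Cexc)\Cgap\,\expect[\val(S_2)]$. A symmetric, easier argument handles the items of \D accepted in Algorithm~\ref{alg-exc-after}: those items are simply deleted on line~\ref{step:return}, but one must still check they do not inflate $\val(S_2)$ in a way that breaks the bound — here the $(1+\Cexc)$ exchange threshold again gives that the $\val$-mass sitting on such items is a bounded fraction of $\val(S_2)$.

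The main obstacle I anticipate is making the charging argument robust to the fact that $\I$ (and hence the weights $f(v\mid\I)$) changes over time: the buffer \C is filtered against the \emph{current} \I at every insertion, and an item's marginal value only decreases as \I grows by submodularity, so an item that passes the test stays eligible, but its recorded $\val$ is the value \emph{at sampling time}, not at buffer-entry time. I would handle this by indexing everything by the sequence of exchange steps $t=1,2,\dots$, letting $\I_t$ be the solution just before step $t$, and showing that at step $t$ the conditional expected $\val$-contribution to $S_2$ from a deleted item is at most $\Cgap$ times the conditional expected total $\val$-contribution, \emph{and} that the weight-gain telescopes so that $\sum_t (\text{contribution at }t) = \val(S_2)$ up to the $(1+1/\Cexc)$ slack coming from swapped-out items. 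Taking expectations and using linearity then yields the lemma; combining it with Corollary~\ref{corollary:exc-non-robust} (via $\val(S_2')\le f(S_2')\le f(\ALG)$ on one side and $f(\rOPT)\ge f(\OPT(\D))$ on the other) gives Theorem~\ref{theorem:exc}.
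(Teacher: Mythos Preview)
Your proposal is essentially the paper's proof: the per-step inequality you write, $\sum_{v\in\C\cap\D}\frac{1/f(v\mid\I)}{\sum_{u\in\C}1/f(u\mid\I)}\,f(v\mid\I)\le \Cgap\cdot\frac{|\C|}{\sum_{u\in\C}1/f(u\mid\I)}$, is exactly the calculation the paper performs, and your ``$(1+1/\Cexc)$ slack from swapped-out items'' is precisely the Chakrabarti--Kale bound $\val(\KI\setminus S_2)\le \val(S_2)/\Cexc$ that the paper invokes. The paper organizes the argument more cleanly by first introducing $\KI$ (all items ever accepted) and proving $\expect[\val(\KI\setminus\D)]\ge(1-\Cgap)\expect[\val(\KI)]$ in one shot, then transferring to $S_2$; this avoids your vaguer ``geometric-type'' language and also makes your worry about Algorithm~\ref{alg-exc-after} disappear, since that algorithm iterates over $\C\setminus\D$ and hence never accepts a deleted item in the first place.
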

\begin{proof}
Let \KI be the set of items that are ever accepted into the tentative feasible solution in Algorithms~\ref{alg-exc} and~\ref{alg-exc-after}, i.e., including $S_2$ and those that are first accepted but later swapped.
We first show that \KI is robust in the sense that
$\expect[\val(\KI')] \ge (1-\Cgap) \expect[\val(\KI)]$,
where $\KI' = \KI \setminus \D$.

Let $v_i$ be the $i$-th item added into \KI, where $i \le \nV = |\V|$.
We know that item $v_i$ is sampled from a candidate set $\C_i$ with a probability proportional to $1/\val(v_i)$.
Besides, $|\C_i| \ge \nD / \Cgap \ge |\D| / \Cgap$.
Thus,
\begin{align*}
	\expect[\val(\KI')] 
	&= \expect\big[\sum_{v \in \KI} \val(v) (1 - \indicator[v \in \D]) \big] \\
	&= \expect[\val(\KI)] - \expect\big[\sum_{v \in \KI} \val(v) \indicator[v \in \D]\big] \\
	&= \expect[\val(\KI)] - \sum_{i \le \nV} \expect\big[ \val(v_i) \indicator[v_i \in \D] \big] \\
	&\ge \expect[\val(\KI)] - \sum_{i \le \nV} \expect\big[ \frac{|\D|}{|\C_i|} \val(v_i) \big] \\
	&\ge \expect[\val(\KI)] - \sum_{i \le \nV} \expect\big[ \Cgap \val(v_i) \big] \\
	&= \expect[\val(\KI)] - \Cgap \expect[\val(\KI)],
\end{align*}
where the first inequality is due to Lemma~\ref{lemma:imp}.

Next, we show that $S_2$ is robust, too.
Let $\K = \KI \setminus S_2$, and $\K' = \K \setminus \D$.
A useful property about \K that is shown in \citet[Lemma~2]{chakrabarti2015submodular} is that $\val(S_2)/\Cexc \ge \val(\K)$.
Therefore,
\begin{align*}
	&\expect[\val(\K') + \val(S_2')] 
	 = \expect[\val(\KI')] \\
	&\qquad\ge (1-\Cgap) \expect[\val(\KI)] 
	= (1-\Cgap) \expect[\val(\K) + \val(S_2)].
\end{align*}
By linearity of expectation and rearranging, we have
\begin{align*}
	\expect[\val(S_2')] 
	&\ge (1-\Cgap) (\expect[\val(\K)] + \expect[\val(S_2)]) - \expect[\val(\K')] \\
	&\ge (1-\Cgap) (\expect[\val(\K)] + \expect[\val(S_2)]) - \expect[\val(\K)] \\
	&= (1-\Cgap) \expect[\val(S_2)] - \Cgap \expect[\val(K)] \\
	&\ge (1-\Cgap) \expect[\val(S_2)] - \Cgap \expect[\val(S_2)] /\Cexc \\
	&= (1-(1+1/\Cexc)\Cgap) \expect[\val(S_2)],
\end{align*}
completing the proof.
\end{proof}

Finally, we complete the proof of Theorem~\ref{theorem:exc}.
\begin{proof}[Proof of Theorem~\ref{theorem:exc}]
We know that \OPT is the optimal solution over data $\V \setminus \D$, which is worse than the optimum solution $O$ over $(\V \setminus \C) + (\C \setminus \D)$,
that is, $f(O) \ge f(\OPT)$.
Therefore,
\begin{align*}
	\expect[f(\ALG)] 
	&\ge \expect[\val(\ALG)] \\
	&= \expect[\val(S_2 \setminus \D)] \\
	&\ge (1-(1+1/\Cexc)\Cgap) \expect[\val(S_2)] 
	&& \triangleright\text{Lemma \ref{lemma:exc-robust}}\\
	&\ge (1-(1+1/\Cexc)\Cgap) f(O) /\Capxv
	&& \triangleright\text{Corollary~\ref{corollary:exc-non-robust}}\\
	&\ge (1-(1+1/\Cexc)\Cgap) f(\OPT) /\Capxv, 
\end{align*}
completing the proof.
\end{proof}

\section{The proposed offline algorithm}
\label{section:offline}
\begin{algorithm}[t]
\DontPrintSemicolon
\KwIn{parameter \Cgap}
$\R \gets $ the set of top-\nD items in \V according to $f(\{v\})$\;
$\V \gets \V \setminus \R, \, j \gets 1, \, \I_j \gets \emptyset$\;
\Do{$|\V| > 0$}{
	$C_j \gets$ top-$(\max \left\{\frac{\nD}{j\Cgap}, 1\right\})$ items in \V w.r.t. $f(v \mid \I_j)$\;
	$\R \gets \R \cup C_j$\;
	\If{$|\C_j| \ge \frac{\nD}{j\Cgap}$}{
		Sample an item $v_j$ from $\C_j$ with a probability proportional to $1 / f(v_j \mid \I_j)$\;
		$\I_{j+1} \gets \I_j + v_j$\;
	}
	$\V \gets \{v \in \V \setminus \C_j: \I_{j+1} + v \in \M \}, \, j \gets j+1$\;
}
\Return{$(\R, \{\I_j\}_j)$}\;
\caption{Offline robust coreset for \rcpsystem}
\label{alg-offline-coreset}
\end{algorithm}

\begin{algorithm}[t]
\DontPrintSemicolon
\KwIn{Coreset and auxiliary information $(\R, \{\I_j\}_j)$ returned by Algorithm~\ref{alg-offline-coreset}, set of deleted items $\D$}
$\I \gets \I_i$ where $i=\max_j j$ \;
$H \gets$ a greedy solution using items in $\R \setminus \D$ \label{step:greedy}\;
\Return{the best solution among $\{\I \setminus \D, H\}$}\;
\caption{Construction of \rcpsystem solution after deletion}
\label{alg-offline-selection}
\end{algorithm}

We start our exposition by presenting a unified Algorithm~\ref{alg-offline-coreset} to construct a robust coreset for both \nm-system and cardinality constraints.
However, different algorithms (Algorithms~\ref{alg-offline-selection} and~\ref{alg-offline-selection-cardinality}~\cite{appendix}, respectively) are needed to extract the final solution after the deletion of items by the adversary.

Algorithm~\ref{alg-offline-coreset} constructs a robust coreset by iteratively collecting 
the items with the largest marginal gains with respect to a tentative solution \I, 
and in each iteration, sampling an item from the collected set and adding it into \I.
Algorithm~\ref{alg-offline-selection} or~\ref{alg-offline-selection-cardinality} extracts the final solution after deletion.
The running time in terms of query complexity, 
i.e., the number of calls to function~$f$,
of Algorithms~\ref{alg-offline-coreset}, \ref{alg-offline-selection} and~\ref{alg-offline-selection-cardinality} is
$\bigO(\nV \nI)$,
$\bigO\!\left((\nD \log(\nI)/\Cgap + \nI) \nI\right)$, and
$\bigO\!\left((\nD \log(\nI)/\Cgap + \nI) \log(\nI)/\Cgap \right)$,
respectively.
Our main results are stated below.

\begin{theorem}
\label{theorem:offline-matroid}
Algorithms~\ref{alg-offline-coreset} and~\ref{alg-offline-selection} 
yield a $\frac{1}{\nm+1 + (\nm+1)/(1-\Cgap)}$ approximation guarantee for the \rcpsystem problem, 
using a coreset of size $\bigO(\nD \log(\nI)/\Cgap + \nI)$.
\end{theorem}
\note{Simplified ratio $\frac{1}{2(\nm+1) + 2(\nm+1)\Cgap}$ with $1/(1+\Cgap) \le 1+2\Cgap$ when $\Cgap < 0.5$.}

Using a proof similar to the one of Theorem~\ref{theorem:offline-matroid},
we can obtain a stronger approximation ratio for a single matroid constraint,
by replacing the greedy algorithm (Step~\ref{step:greedy}) in Algorithm~\ref{alg-offline-selection} 
by a more advanced continuous greedy algorithm \citep{calinescu2011maximizing}.
\begin{theorem}
\label{theorem:offline-1matroid}
Algorithms~\ref{alg-offline-coreset} and a modified Algorithm~\ref{alg-offline-selection} 
yield a $\frac{1}{e/(e-1) + 2/(1-\Cgap)}$ approximation for the \rcm problem, 
using a coreset of size $\bigO(\nD \log(\nI)/\Cgap + \nI)$.
\end{theorem}

In the simpler case of a cardinality constraint, we can achieve a better approximation ratio
by a \sieve-like algorithm \cite{badanidiyuru2014streaming} to extract the final solution.
\begin{theorem}
\label{theorem:offline-cardinality}
Algorithms~\ref{alg-offline-coreset} and~\ref{alg-offline-selection-cardinality}
yield a $(1-2\Cgap)/2$ approximation guarantee for the \rcc problem, 
using a coreset of size $\bigO(\nD \log(\nI)/\Cgap + \nI)$.
\end{theorem}

We will devote
the rest of this section for proving Theorem~\ref{theorem:offline-matroid}.
Proof for Theorem~\ref{theorem:offline-cardinality} is deferred to Appendix~\cite{appendix}.

\subsection{Proof of Theorem~\ref{theorem:offline-matroid}}
\label{section:offline-matroid}
The strategy in Algorithms~\ref{alg-offline-coreset} is to sample-and-keep disjoint candidate sets, which forces the adversary to invest its deletions among these disjoint sets.
To ensure a bounded expected loss due to the deletions, 
we perform importance sampling (also known as ``uselessness'' sampling) in Lemma~\ref{lemma:imp} among each candidate set.
We further show that it is safe to reduce the size of candidate sets harmonically, 
as the expected marginal gain of the sampled items is non-increasing.

For the remainder of the section, we will adopt the following notation. Let
$\{\I_i\}$ be the partial solutions discovered by Algorithm~\ref{alg-offline-coreset},
and let $v_i$ be the item added to $\I_i$, that is, $\I_{i + 1} = \I_i + v_i$.
Let $\C_i$ be the sets from which Algorithm~\ref{alg-offline-coreset} samples $v_i$.
In addition, let~$\D$ be the set of deleted items by the adversary.
Finally, we write $\I'_i = \I_i \setminus \D$.

Next we show that the gain of item $v_j$ is non-increasing in~$j$.
\begin{lemma}
\label{lemma:nonincgain}
For any $j < i$, we have
$f(v_j \mid \I_j) \ge f(v_{i} \mid \I_{i})$.
\end{lemma}
\begin{proof}
\deferredproof
\end{proof}

The following lemma shows the robustness of the tentative partial solution $\I$ built in Algorithm \ref{alg-offline-coreset}, in the sense that 
$\expect[f(\I'_i)]$ is close to $\expect[f(\I_i)]$.
Intuitively, the expected loss of the first item in \I is small as its candidate set $\C_1$ has a large size $\nD/\Cgap$.
A subsequent item in \I can be sampled with a decreasing candidate size, because previously added items can help compensate if its candidate set is attacked by the adversary.
\begin{lemma}\label{lemma:robust}
$\expect[f(\I_i')] \ge (1-\Cgap) \expect[f(\I_i)]$.
\end{lemma}
\begin{proof}
We start by bounding $\expect[f(\I'_i)]$,
\begin{align*}
\expect[f(\I'_i)] 
& = \expect[\sum_{j < i} f(v_j \mid \I_j \setminus D) \indicator[v_j \notin \D]] \\
&\ge \expect[\sum_{j < i} f(v_j \mid \I_j) \indicator[v_j \notin \D]] \\
&= \expect[f(\I_i)]  - \expect[\sum_{j < i} f(v_j \mid \I_j) \indicator[v_j \in \D]],
\end{align*}
where the inequality is due to submodularity.

Now we bound further the second term.
For simplicity let us write $g_j = f(v_j \mid \I_j)$.
Recall that $C_j$ is the set from which Algorithm~\ref{alg-offline-coreset} samples $v_j$.
Note that $|C_j| \geq \frac{\nD}{j \epsilon}$, and that the sets $\{C_j\}$ do not overlap. 
Define $D_j = C_j \cap D$.
Note that $\D_j$ is also a random variable like $\C_j$, 
which depends on previously sampled items $\I_j$.
Then
\begin{align*}
&\expect \Big[\sum_{j < i} g_j \indicator[v_j \in \D] \Big] 
= \sum_{j < i} \expect\big[ \expect\left[ g_j \indicator[v_j \in \D_j] \mid \I_j \right] \big] \\
&\le \sum_{j < i} \expect\left[ \frac{|\D_j|}{\nD/j\Cgap}\expect[g_j \mid \I_j ] \right] 
= \frac{\Cgap}{\nD} \expect\Big[ \sum_{j < i} |\D_j|j g_j \Big],
\end{align*}
where for each $j$, the outer expectation is taken over $\I_j$ and the inner expectation is over $v_j$.
The inequality follows from Lemma~\ref{lemma:imp}.

Let $\eta = \arg \max_j j g_j$ be the index yielding the highest summand. 
Since $g_j$ is non-increasing in $j$ by Lemma~\ref{lemma:nonincgain}, we have 
\[
\sum_{j < i} |\D_j|j g_j
\le \sum_{j < i} |\D_j| \eta g_\eta
\le \nD \eta g_\eta
\le \nD \sum_{j \le \eta} g_j
\le \nD \sum_{j < i} g_j.
\]
Therefore, we have
\begin{align*}
\frac{\Cgap}{\nD} \expect\Big[ \sum_{j < i} |\D_j|j g_j \Big]
\le \frac{\Cgap}{\nD} \expect\Big[ \nD \sum_{j < i} g_j \Big]
= \Cgap \expect[f(\I_i)].
\end{align*}
Combining the three inequalities proves that
$\expect[f(\I'_i)] \ge \expect[f(\I_i)] - \Cgap \expect[f(\I_i)]$,
completing the proof.
\end{proof}

The next lemma follows immediately.

\begin{lemma}
\label{lemma:robust-G}
Let $S$ be a set of items. Then for any $i$,
$$
    \expect[f(\I_i' \cup S)] \ge (1-\Cgap) \expect[f(\I_i \cup S)].
$$
\end{lemma}
\begin{proof}
\deferredproof
\end{proof}

Finally, we are ready to prove Theorem \ref{theorem:offline-matroid}.

\begin{proof}[Proof of Theorem~\ref{theorem:offline-matroid}]
Let $i$ be the largest index used by  Algorithm~\ref{alg-offline-coreset}, and write
let $\I = \I_{i+1}$ be the maximal partial solution in Algorithm~\ref{alg-offline-coreset}. Write also 
$\I' = \I \setminus \D$.
Similarly, \R is our coreset and $\R' = \R \setminus \D$.
To prove the claim, we compare \I with \OPT.
\begin{align*}
&f(\OPT)
\le f(\I \cup \OPT) 
\le f(\I) + f(\OPT \setminus \I \mid \I) \\
&\le f(\I) + f((\OPT \setminus \I) \cap \R' \mid \I) + f((\OPT \setminus \I) \setminus \R' \mid \I) \\
&\le f(\I) + (\nm+1) f(H) + f(\OPT \setminus \R' \mid \I).
\end{align*}

The last step is because any feasible solution in $\R'$, including $(\OPT \setminus \I) \cap \R'$, is within $\nm+1$ approximation of the greedy solution $H$ of Algorithm~\ref{alg-offline-selection} \citep{fisher1978analysis,calinescu2011maximizing}.
Now we deal with the last term.
Note that $\OPT \setminus \R' = \OPT \setminus \R$. Then
\begin{align*}
f(\OPT \setminus \R \mid \I) 
\le \sum_{u \in \OPT \setminus \R} f(u \mid \I) 
= \sum_{u \in O} f(u \mid \I),
\end{align*}
where $O = \{u \in \OPT \setminus \R : \I+u \not\in \I \}$.
Here the last step is due to the fact that the chosen \I is maximal, and an item will be discarded only when it is infeasible to \I.

Let $O = u_1, \ldots, u_{|O|}$ be the order in which Algorithm~\ref{alg-offline-coreset} discards the items in $O$.
Define a function $\pi$ with $\pi(u_\ell) = \lceil \ell / \nm \rceil$.
Let
\[
    O_j = \{ u \in O: \I_{j + 1} +u \notin \I\}.
\]
Note that $O_{j},\I_{j+1} \in \I$ and $\I_{j+1} + u \notin \I$ for every $u \in O_{j}$. Thus
$\I_{j+1}$ is a maximal independent set in $Y=\I_{j+1} \cup O_{j}$, and, by definition of \nm-system,
$|O_{j}| \le \nm |\I_{j+1}| = \nm j$ for all $j$.

Let $u = u_\ell \in O_{j} \setminus O_{j - 1}$.
Then, $\ell \leq \nm j$ and $\pi(u_\ell) \leq j$. Since $u$ is discarded after $v_j$ is added,
\[
	f(u \mid \I) \leq f(u \mid \I_j) \leq  f(v_j \mid \I_j) \leq  f(v_{\pi(u)} \mid \I_{\pi(u)}).
\]
Lastly, we have 
\begin{align*}
	\sum_{u \in O} f(u \mid \I) 
	 &\leq \sum_{u \in O} f(v_{\pi(u)} \mid \I_{\pi(u)}) 
	 \leq \nm \sum_{j} f(v_j \mid \I_j)
	 = \nm f(\I).
\end{align*}
Putting everything together, we have
\begin{align*}
f(\OPT) &\le f(\I) + (\nm+1) f(H) + \nm f(\I) \\
&\le (\nm+1+\frac{\nm+1}{1-\Cgap}) \expect[f(\ALG)],
\end{align*}
where the last step is due to Lemma \ref{lemma:robust}.

To bound the coreset size, note that $|\R|$ is bounded by
\[
	\nD + \sum_{j = 1}^\nI |C_j| \leq \nD + \sum_{j = 1}^\nI \max \left\{ 1, \frac{\nD}{\Cgap j}\right\} \leq \nD + k + \nD (\ln(\nI)+1) /\Cgap,
\]
completing the proof.
\end{proof}


\section{Experiments}
\label{section:experiment}

\begin{table}[t]
  \caption{Datasets statistics}
  \label{tbl:datasets}
  \centering
\begin{tabular}{lrrr}
\toprule
Dataset & $\nV = |\V|$	&$\nI$	&\M  \\
\midrule
Movielens \cite{harper2015movielens}	&22 046	&20	&2-matroid	\\
Facial images \cite{zhang2017age}	&23 705	&25	&1-matroid	\\
Github social network \cite{rozemberczki2019multiscale}	&37 700	&20	&cardinality	\\
Uber pickups \cite{uber2020}	&50 000	&25	&1-matroid	\\
Songs \cite{Bertin-Mahieux2011}	&137 543	&20	&cardinality	\\
\bottomrule
\end{tabular}
\end{table}

\begin{figure}[t]
    \centering
    \subcaptionbox{Movie recommendation}{
    \includegraphics[width=.49\textwidth, trim = 2.5cm 2cm 0.8cm 0.7cm, clip]{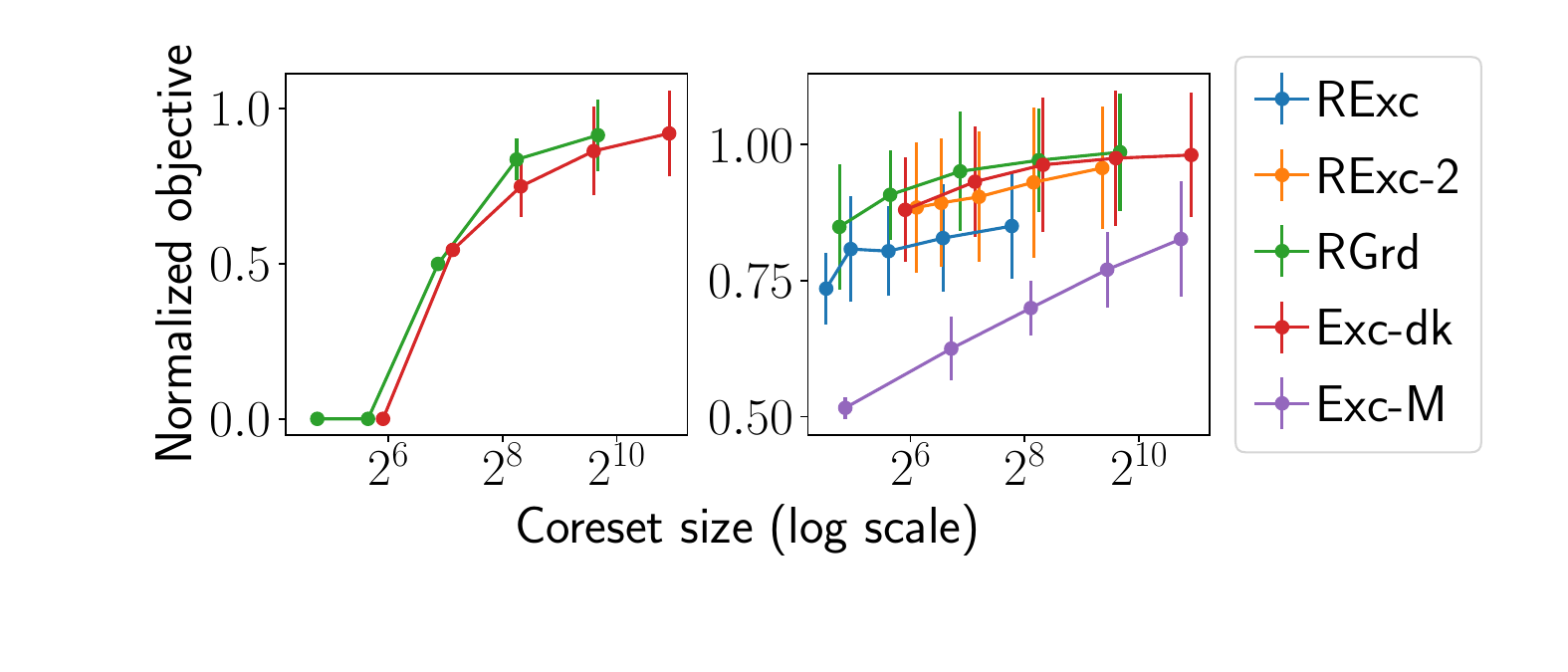}
    }
    \subcaptionbox{Facial image selection (x-axis: coreset size)}{
    \includegraphics[width=.49\textwidth, trim = 2.5cm 3cm 0.8cm 0.7cm, clip]{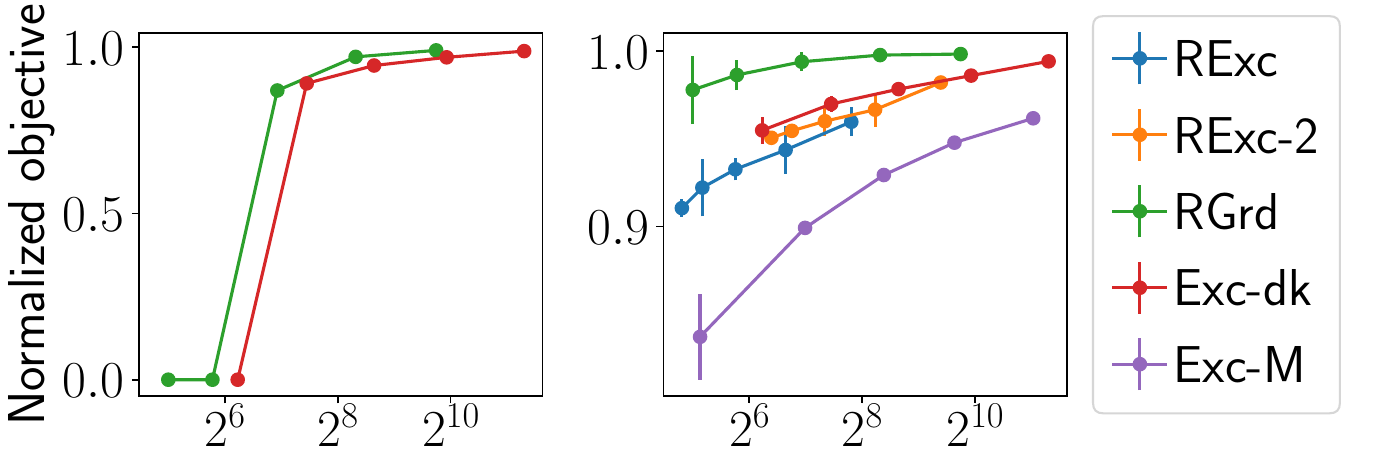}
    }
    \subcaptionbox{Uber pickups summarization (x-axis: coreset size)}{
    \includegraphics[width=.49\textwidth, trim = 2.5cm 3cm 0.8cm 0.7cm, clip]{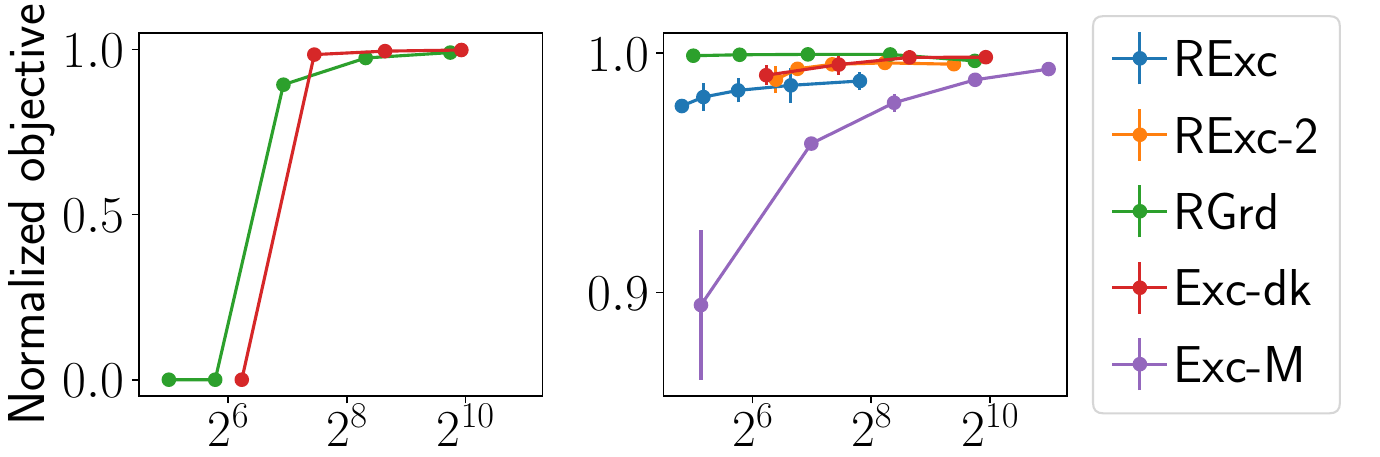}
    }
	\subcaptionbox{Network influence maximization in Github (x-axis: coreset size)}{
    \includegraphics[width=.49\textwidth, trim = 2.5cm 3cm 0.8cm 0.7cm, clip]{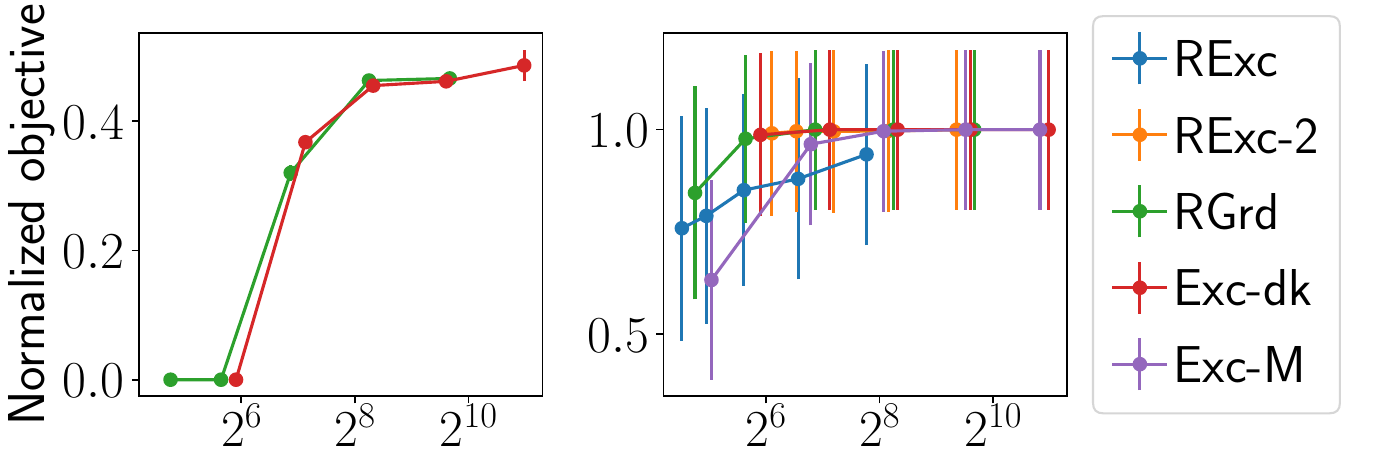}
    }
    \subcaptionbox{Popular song selection}{
    \includegraphics[width=.49\textwidth, trim = 2.5cm 2cm 0.8cm 0.7cm, clip]{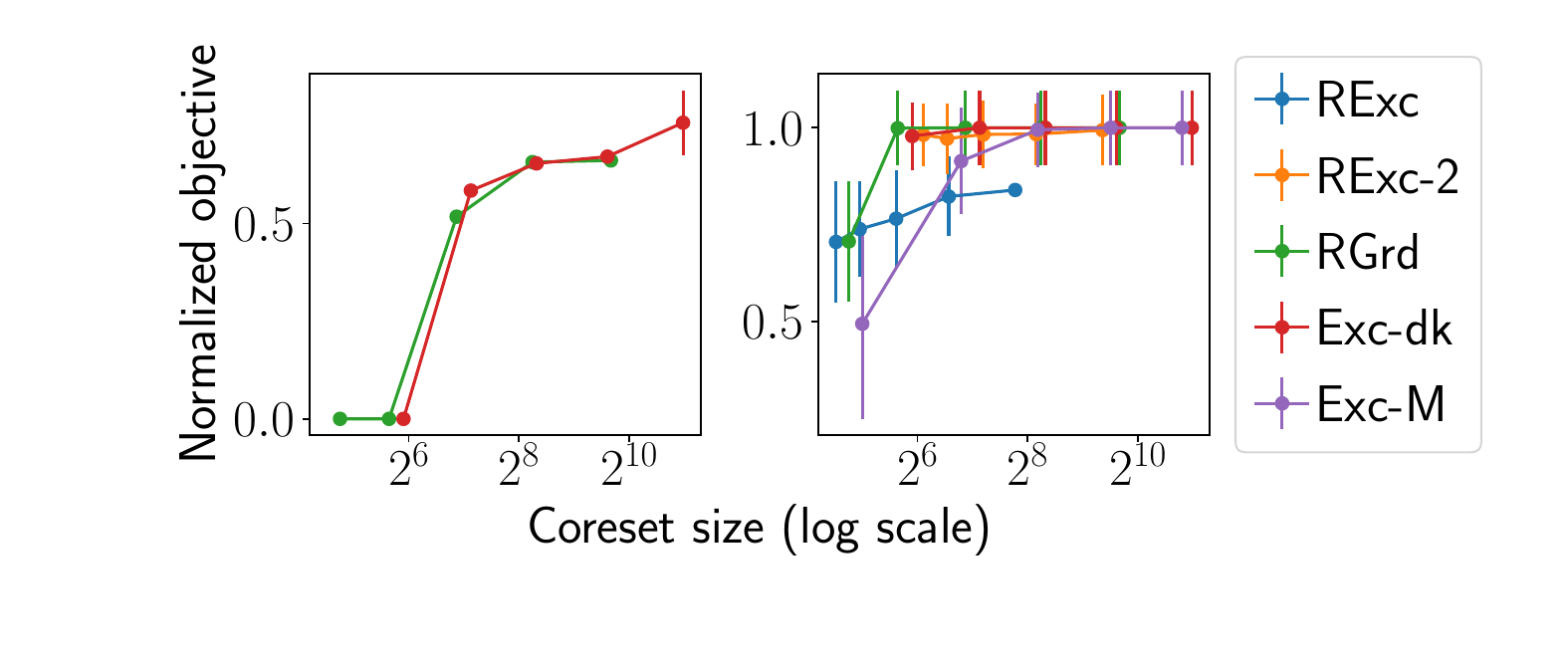}
    }
    \caption{\label{fig:results}
    Experiment results.
    The adversary (left: adaptive, right: static) deletes items of a fixed size 100.
    Parameter \nD (and coreset size) in each algorithm is gradually increased to 100.
    }
\end{figure}

In this section, we evaluate the proposed algorithms against state-of-the-art baselines.
All methods are tasked with various subset-selection applications over real-life data.
Statistics of the datasets used are summarized in Table~\ref{tbl:datasets}.
The applications are described below 
(Sections~\ref{section:experiment-movie}--\ref{section:experiment-song})
followed by a discussion of experimental results (Section~\ref{section:result}) and 
an evaluation of running time (Section~\ref{section:runtime}).
Further details of the experiments are deferred to Section~\ref{section:experiment-details}~\cite{appendix}.
We introduce the competing algorithms and adversaries below.

\paragraph{Algorithms}
Competing algorithms include:
\begin{itemize}
	\item \algrexc, the robust \exc algorithm presented in Algorithm~\ref{alg-exc}, 
    Section~\ref{section:streaming};
    \item \algcrexc, two cascading instances of the \algrexc algorithm;
	\item \alggreedy, the offline robust greedy algorithm presented in Algorithm~\ref{alg-offline-coreset}, Section~\ref{section:offline};
	\item \algdk, a flexible reduction proposed by \citet{mirzasoleiman2017deletion} that constructs $\nD+1$ cascading \exc instances;
	\item \algdutting, the previous state-of-the-art robust \exc algorithm by \citet{dutting2022deletion} that performs uniform sampling on top of multiple candidate sets, each associated with an increasing threshold on marginal gain.
\end{itemize}
Their objective values of all methods 
are normalized by that of an \emph{omniscient} greedy algorithm,
which is aware of deleted items in advance.
Algorithms \alggreedy and \algdk are also challenged to an adaptive adversary.
A fixed parameter $\Cgap=0.5$ is used to avoid large coresets.

\paragraph{Adversary}
We consider two types of adversaries, static and adaptive, 
which make deletions over the whole universe of items \V or only over the coreset, respectively.
To introduce randomness in a principled way,
given an integer \nD, we simulate an adversary by running 
the \emph{Stochastic Greedy} algorithm~\cite{mirzasoleiman2015lazier} 
and obtain a deletion set~\D of size \nD.
Concretely, in each iteration, we add into \D the greedy item among a multiple of $z/\nD$ random items,
where $z=\nV$ for a static adversary and $z$ is the coreset size for an adaptive one.

As a general strategy, we let the adversary delete $100$ items, and
we gradually increase the parameter \nD in each algorithm until it reaches $100$.

\subsection{Personalized movie recommendation}
\label{section:experiment-movie}
Robust recommendation is favorable in practice due to uncertain deletions caused by user preference.
A popular approach to personalized recommendation \cite{mitrovic2017streaming} is to optimize the following submodular function,
\[
	f_u(S) = (1-\lambda) \sum_{v \in S} \text{sim}(u,v) + \frac{\lambda \nI}{|\V|} \, \sum_{w \in \V} \max_{v \in S} \text{sim}(w,v),
\]
such that $|S| \le \nI$.
Here $\text{sim}(u,v)$ measures the relevance of an item $v$ to the target user $u$, and
$\text{sim}(w,v)$ the similarity between two items $w,v$.
The second term represents a notion of representative\-ness,
i.e., for every non-selected item $w \in \V$, 
there exists some item $v \in S$ that is similar enough to $w$.

We choose the Movielens dataset \cite{harper2015movielens},
which consists of 9\,724 movies and hundreds of users.
We obtain feature vectors for users and movies by applying SVD on the user-movie rating matrix, and
let $\text{sim}(\cdot,\cdot)$ be the natural dot product.
A random user is chosen for the recommendation task.
A movie may belong to more than one of 20 genres, and
we further impose a 2-matroid on a feasible solution $S$, i.e.,
every movie can be selected at most once and at most one movie can be selected for each genre.
Tradeoff parameter $\lambda$ is fixed to 0.5.
The results are reported in Figure~\ref{fig:results}(a).

\subsection{Facial image selection}
\label{section:face}

Exemplar-based applications, such as nearest-neighbor models and recommender systems, 
are ubiquitous in data science.
However, the ``right to be forgotten'' can lead to the case where some items must be be deleted \citep{voigt2017eu}.
In such cases, a robust coreset is desirable, 
so as to maintain a representative summary for applications after data-item deletions.

A dataset \V can be summarized by a representative subset of data $S$ 
via minimizing the classic \nI-medoid function,
\[
	g(S) = \sum_{v \in \V} \min_{u \in S} d(u,v), 
\]
where $d(u,v)$ measures the distance between $u$ and $v$.
Intuitively, 
for each item $v$ in the data, there should exist some item $u$ in the summary $S$ that is close to $v$.
The above function can be turned into a submodular maximization problem 
by measuring the total reduction of distance with respect to some item $w \in V$ instead, i.e.,
$f(S) = g(\{w\}) - g(S+\{w\})$ \cite{mirzasoleiman2013distributed}.
We let $w$ be the an arbitrary random item.

We experiment with a dataset of
facial images~\cite{zhang2017age},
under a partition matroid according to races (5 images per race and $\nI=25$),
with the distance function being the $\ell_1$ metric.
The results are reported in Figure~\ref{fig:results}(b).

\subsection{Geolocation data summarization}
\label{section:experiment-geolocation}

We experiment with a similar task as in Section~\ref{section:face}, 
except for a different dataset,
Uber pickups \cite{uber2020}.
Every data point indicates a location of Uber pickups in New York City in April, 2014.
We measure the distance by a natural $\ell_1$ metric.
A partition matroid is imposed according to the base companies 
(at most 5 pickups for each company and $\nI=25$).
The results are reported in Figure~\ref{fig:results}(c).

\subsection{Network influence maximization}
\label{section:experiment-influence}

For viral-marketing applications in social networks,
the goal is to identify a small set of seed nodes who can influence many other users.
For popular diffusion models, the number of influenced nodes is a submodular function
of the seed set \cite{kempe2015maximizing}.
Here, we consider deletion-robust viral marketing for a simple diffusion model, 
where a seed node always influences all its neighbors, i.e.,
$f(S) = |\cup_{v \in S} N(v)|$ returns a dominating set, where $N(v)$ represents neighbors of~$v$.
We choose the dataset of Github social network \cite{rozemberczki2019multiscale}, and
specify a cardinality limit of $\nI=20$.
The results are reported in Figure~\ref{fig:results}(d).

\subsection{Popular song selection}
\label{section:experiment-song}

Given song-by-song listening history of users, 
one wishes to select a set of popular songs $S$ that can ``cover'' the most users.
A user is covered if she likes at least one song in $S$.
That is, 
$f(S) = |\cup_{v \in S} L(v)|$,
where $L(v)$ represents the set of users who like song~$v$.
We aim for a deletion-robust coreset for such popular songs.
Concretely, we use the million song dataset \cite{Bertin-Mahieux2011}, 
consisting of triples representing a user, song, and play count.
We assume that a user likes a song if the song is played more than once.
We impose a cardinality limit of $\nI=20$.
The results are reported in Figure~\ref{fig:results}(e).

\subsection{Discussion of results}
\label{section:result}

Overall, 
against a static adversary,
the proposed \alggreedy algorithm performs the best and converges 
with the smallest coreset,
while the proposed \algrexc algorithm achieves relatively good performance while requiring the most parsimonious coreset.
Note that the size difference in the coreset will become more extreme as $\nI$ increases.
For an adaptive adversary,
the \alggreedy algorithm remains the most robust.

The \algdutting algorithm has the worst performance most of the time, except for tasks with a simple cardinality constraint and a relatively large coreset size.
This behavior illustrates the insufficient efficacy of uniform sampling.
Uniform sampling on top of thresholded candidate sets,
which is adopted by many previous robust algorithms \cite{kazemi2018scalable,dutting2022deletion},
faces a dilemma between a large coreset or a crude distinction of item importance 
(i.e., few crude thresholds due to large $\Cgap$).
This issue is properly addressed by the non-uniform sampling technique in this paper.

On the other hand, cascading instances in the \algdk algorithm appears to be another 
promising way for preserving valuable items in stream computation.
However, this approach comes with a cost of expensive computation 
(see Section~\ref{section:runtime}).
Besides, its coreset size explodes even with a moderate value of~\nD, 
Staying with a small \nD parameter, however, fails to secure a theoretical guarantee when more items are deleted.

Algorithms \algrexc an \algdk preserve valuable and compatible items in two different ways.
This naturally suggests that one can combine the best of both worlds
by constructing a small number of cascading \algrexc instances.
Then one is expected to further enhance the performance while maintaining a parsimonious coreset and a strong guarantee.
This is indeed the case as reflected by the remarkable performance of the \algcrexc algorithm, which uses merely two instances of \algrexc.

In summary, 
we conclude that the \alggreedy algorithm is a reliable choice if an offline algorithm is allowed.
In a streaming setting,
a small number of cascading \algrexc instances is recommended.

\subsection{Running time analysis}
The running time of all algorithms over the song dataset is shown in Figure~\ref{fig:runtime}.
The most significant message of Figure~\ref{fig:runtime} is that 
the \algdk algorithm is computationally costly when the value of parameter $\nD$ grows.

\label{section:runtime}
\begin{figure}[t]
    \centering
    \includegraphics[width=.35\textwidth, trim = 0cm 0.8cm 0cm 0cm, clip]{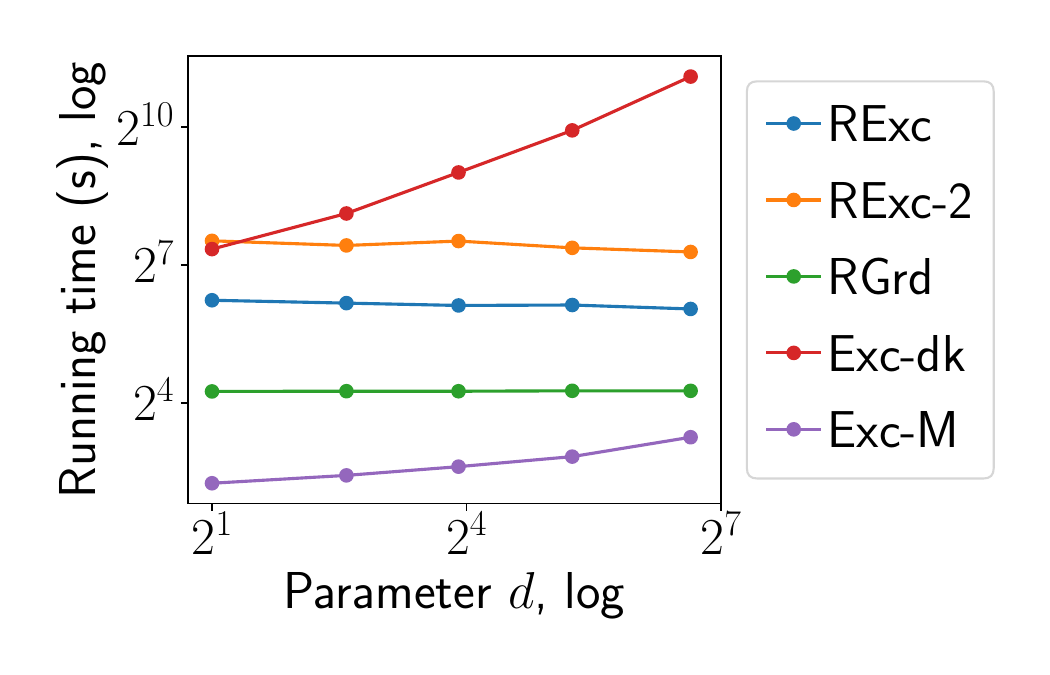}
    \caption{Running time on the song dataset
    }
    \label{fig:runtime}
\end{figure}

\section{Conclusion}
\label{section:conclusion}

In the presence of adversarial deletions up to \nD items,
we propose a single-pass streaming algorithm that 
yields $(1-2\Cgap)/(4\nm)$-approximation for maximizing a non-decreasing submodular function 
under a general \nm-matroid constraint and 
requires an (asymptotically) optimal coreset size $\nI + \nD/\Cgap$, where \nI is the maximum size of a feasible solution.
Besides, we develop an offline greedy algorithm 
that guarantees stronger approximation ratios, and
performs effectively even against an adaptive adversary.

One vital tool for robustness in the proposed algorithms is ``uselessness'' sampling
that preserves valuable items within the candidate set and avoids great loss caused by adversarial deletions in expectation.
Another insight is a close connection between robustness and streaming algorithms.
The latter ensures a quality guarantee given an arbitrary arrival order of items, 
including the specific random order introduced by the sampling.

Potential directions for future work include 
a potentially stronger approximation ratio in the offline setting,
extensions to non-monotone submodular maximization, and 
a stronger adaptive adversary.

\todo{Extend to non-monotone $f$: seems difficult, as a dynamic (instead of fixed) \val is needed \citep{chekuri2015streaming}.}

\section*{Acknowledgment}
%
This research is supported by the Academy of Finland projects MALSOME (343045), AIDA (317085) and MLDB (325117),
the ERC Advanced Grant REBOUND (834862), 
the EC H2020 RIA project SoBigData++ (871042), 
and the Wallenberg AI, Autonomous Systems and Software Program (WASP) 
funded by the Knut and Alice Wallenberg Foundation.

\bibliographystyle{IEEEtranN} 
\bibliography{IEEEabrv,references}		
\fi 

\ifsupp 
\newpage
\appendix

\subsection{Omitted proofs}
\label{section:omitted-proofs}

\begin{proof}[Proof of Lemma~\ref{lemma:nonincgain}]
Note that $v_j$ is sampled from $\C_j$ and $v_{i}$ is sampled from~$\C_{i}$, and $\C_{j} \cap \C_{i} = \emptyset$.
Since $\C_j$ includes top items sorted by $f(\cdot \mid \I_j)$ and $v_{i} \not\in \C_j$, we have
$f(v_j \mid \I_j) \ge f(v_{i} \mid \I_j) \ge f(v_{i} \mid \I_{i})$, 
where the second inequality is due to submodularity.
\end{proof}

\begin{proof}[Proof of Lemma~\ref{lemma:robust-G}]
Since $\I_i' \subseteq \I_i$, submodularity and Lemma \ref{lemma:robust} imply that
\begin{align*}
    \expect[f(\I_i \cup S)] - \expect[f(\I_i' \cup S)] & \leq \expect[f(\I_i)] - \expect[f(\I_i')] \\
    & \leq \Cgap \expect[f(\I_i)]  \leq \Cgap \expect[f(\I_i \cup S)],
\end{align*}
proving the claim.
\end{proof}

\subsection{Streaming algorithm for the \rcc problem}
\label{section:streaming-cardinality}

\begin{algorithm}[t]
\DontPrintSemicolon
\KwIn{parameter \Cgap}
$\topd \gets$ the set of top-\nD singletons seen so far\;
$\topv_\nD \gets$ the value of the top $(\nD+1)$-th singleton \;
\For{each new item $v$ in the stream}{
	Update $\topv_\nD$ and $\topd$\;
	\If{\topd is changed}{
		$v \gets$ the swapped-out item in $\topd$\;
	}
	$\thrs \gets \{ (1+\epsilon)^i : \frac{\topv_\nD}{2\nI(1+\Cgap)} \le (1+\epsilon)^i \le \topv_\nD, i \in \naturals \}$\;
	\For{each $\thr \in \thrs$ in parallel}{\label{step:innerloop}
		\If{$f(v \mid \I_\thr) \ge \thr$ and $|\I_\thr| < \nI$}{
			$\C_\thr \gets \C_\thr + v$\;
		}
		\If{$|\C_\thr| \ge \nD/\Cgap$}{
			Sample an item $v$ from $\C_\thr$ with a probability proportional to $1 / f(v \mid \I_\thr)$\;
			$\I_\thr \gets \I_\thr + v$\;
			$\C_\thr \gets \{ v \in \C_\thr: f(v \mid \I_\thr) \ge \thr \}$\;
		}
	}
}
$\R \gets \topd \cup \left(\bigcup_{\thr \in \thrs} \C_\thr \right) \cup \left(\bigcup_{\thr \in \thrs} \I_\thr \right)$\;
\Return{$\R, \left\{\I_\thr \right\}_{\thr \in \thrs}$}
\caption{Streaming robust coreset for \rcc}\label{alg-cardinality-streaming-coreset}
\end{algorithm}

A well-known technique developed by \citet{badanidiyuru2014streaming} enables a one-pass streaming algorithm for non-robust submodular maximization.
That is, the algorithm is restricted to read items in \V only once with very limited memory.
The key is to make multiple guesses at the threshold $\thr^*$ such that
$\thr^* \le \frac{f(\OPT)}{2\nI} \le (1+\Cgap) \thr^*$, 
and then build a candidate solution for each guessed threshold in parallel.
The guesses depend on the top singleton encountered so far, and are dynamically updated along the process.

As pointed out in \citet{kazemi2018scalable}, a natural extension for the robust setting is to make guesses according to the set of top $\nD+1$ singletons we have seen so far.
To be more specific, we dynamically maintain a set of geometrically-increasing thresholds within range $[\topv_\nD/2\nI, \topv_\nD]$, 
where $\topv_\nD$ is the value of the current $(\nD+1)$-th top singleton.
Besides, to cope with a static adversary, every item in a candidate solution is randomly sampled from a candidate set of large size.

We adopt the same approach as \citet{kazemi2018scalable}, but significantly simplify their algorithm and improve the coreset size, thanks to the importance sampling technique in Lemma~\ref{lemma:imp}.

We prove Theorem \ref{theorem:streaming-cardinality} in the rest of this section, that is,
Algorithm \ref{alg-cardinality-streaming-coreset} and \ref{alg-offline-selection-cardinality}
yield $(1-2\Cgap)/2$ approximation guarantee for the \rcc problem, with a coreset size $\bigO((\nD/\Cgap + \nI) \log(\nI)/\Cgap)$.
Algorithm~\ref{alg-cardinality-streaming-coreset} constructs the coreset in one-pass using
$\bigO\left(\frac{\log(\nI)}{\Cgap} \big(\nV + \nI \frac{\nD}{\Cgap}\big)\right)$ queries.
Algorithm~\ref{alg-offline-selection-cardinality} needs to be slightly modified when receiving the coreset from Algorithm~\ref{alg-cardinality-streaming-coreset}.
Specifically, at Step~\ref{step:I}, $\I_\thr$ is given directly.

The proof is very similar to that of Theorem \ref{theorem:offline-cardinality},
except that we replace Lemma \ref{lemma:gooditems} and \ref{lemma:robust} with the following two new lemmas.


\begin{lemma}
\label{lemma:gooditems-streaming}
Let $\thr \in \thrs$ be a threshold, $\I_\thr$ its associated partial solution, and $\C_\thr$ candidate set.
If $|\I_\thr| < \nI$, 
then for every item $v \in \V \setminus (\C_\thr \cup \topd)$, 
we have
$f(v \mid \I_\thr) < \thr$.
\end{lemma}
\begin{proof}
Select $\thr \in \thrs$ and $v \in \V \setminus (\C_\thr \cup \topd)$.
Since $v \notin \topd$, consider the iteration when $v$ is properly processed either due
to being a new item or being an item leaving $\topd$. 
Let $\topv_\nD'$, $\C_\thr'$, $\I_\thr'$, and $\thrs'$ be the variables of Algorithm~\ref{alg-cardinality-streaming-coreset} right before the inner for-loop (Step~\ref{step:innerloop}).

Assume that $\tau \in \thrs'$.
If $f(v \mid \I_\thr) \geq \thr$, then $f(v \mid \I_\thr') \geq \thr$ and
$v$ is added to $\C'_\tau$ and never filtered out, that is, $v \in \C_\tau$
which is a contradiction. Thus, $f(v \mid \I_\thr) < \thr$.

If $\tau \notin \thrs'$, then $\thr > \topv_\nD'$ since $\topv_\nD$ can only increase.
Consequently, $f(v \mid \I_\thr) \leq f(v) \leq \topv_\nD' < \thr$.
\end{proof}

\begin{lemma}\label{lemma:robust-streaming}
For each threshold \thr and its associated partial solution $\I_\thr$ kept by Algorithm \ref{alg-cardinality-streaming-coreset}, 
we have
$\expect[f(\I_\thr')] \ge (1-\Cgap) \expect[f(\I_\thr)]$,
where $\I_\thr' = \I_\thr \setminus \D$.
\end{lemma}
\begin{proof}
We fix an arbitrary threshold \thr, and write $\I = \I_\thr = \{v_1, \ldots, v_i\}$.
Let us write $\I_j =  \{v_1, \ldots, v_{j-1}\}$,
and define $\C_j$ to the candidate set $\C_\thr$ from which we sample $v_j$. Then
\begin{align*}
\expect[f(\I')] 
& = \expect\Big[\sum_{j \leq i} f(v_j \mid \I_j \setminus \D) \indicator[v_j \notin \D]\Big] \\
&\ge \expect\Big[\sum_{j \leq i} f(v_j \mid \I_j) \indicator[v_j \notin \D]\Big] \\
&= \expect[f(\I)] - \sum_{j \leq i} \expect\Big[f(v_j \mid \I_j) \indicator[v_j \in \D]\Big] \\
&\ge \expect[f(\I)] - \sum_{j \leq i} \frac{|\D|}{|\C_j|} \expect[f(v_j \mid \I_j)]
&& \triangleright\text{Lemma \ref{lemma:imp}} \\
&\ge \expect[f(\I)] - \Cgap \sum_{j \leq i} \expect[f(v_j \mid \I_j)] 
&& \triangleright |\C_j| \geq \nD/\Cgap \\
&= (1-\Cgap) \expect[f(\I)],
\end{align*}
completing the proof.
\end{proof}

\begin{proof}[Proof of Theorem~\ref{theorem:streaming-cardinality}]
The proof for approximation is essentially the same as in the proof of Theorem~\ref{theorem:offline-cardinality},
except that Lemma~\ref{lemma:gooditems} is replaced with Lemma~\ref{lemma:gooditems-streaming}
and Lemma~\ref{lemma:robust} is replaced with Lemma~\ref{lemma:robust-streaming}.
We omit the details to avoid repetition.

We complete the proof by calculating the coreset size.
Note that during each iteration $|C_\tau|$ can increase only by 1.
If after addition $|C_\thr| > \nD/\Cgap$, then $v$ is sampled from $C_\tau$ and will be filtered out since $f(v \mid \I_\thr) = 0$.
Thus, in the end $|C_\thr| \leq \nD/\Cgap$.
There are $\bigO(\log(\nI)/\Cgap)$ different thresholds, and for each threshold \thr we keep at most $\nD/\Cgap$ items in $\C_\thr$ and a partial solution $\I_\thr$ of at most size \nI.
Hence, the total coreset size is $\bigO((\nD/\Cgap + \nI) \log(\nI)/\Cgap)$.
\end{proof}

\subsection{Offline algorithm for the \rcc problem}
\label{section:offline-cardinality}

\begin{algorithm}[t]
\DontPrintSemicolon
\KwIn{Coreset and auxiliary information $(\R, \{\I_j\}_j)$ returned by Algorithm~\ref{alg-offline-coreset}, set of deleted items $\D$, parameter \Cgap}
$\R' \gets \R \setminus \D$\; 
$\topv \gets $ value of the top singleton in $\R'$ according to $f(\{v\})$\;
$\thrs \gets \left\{ (1+\epsilon)^i : \frac{\topv}{2\nI(1+\Cgap)} \le (1+\epsilon)^i \le \topv, i \in \naturals \right\}$ \label{step:T}\;
\For{$\thr \in \thrs$}{
	$\I_\thr \gets \I_{j + 1}$, where $j \gets \max \left\{j : f(\I_{j+1} \setminus \I_j \mid \I_j) \ge \thr\right\}$ \label{step:I}\;
	$\I_\thr' \gets \I_\thr \setminus \D$\;
	\For{$v \in \R'$}{
		\If{$f(v \mid \I_\thr') \ge \thr$ and $|\I_\thr'| < \nI$}{
			$\I_\thr' \gets \I_\thr' + v$\;
		}
	}
} 
\Return{the best solution among $\{\I_{\thr}'\}_{\thr \in \thrs}$}\;
\caption{Construction of \rcc solution after deletion}
\label{alg-offline-selection-cardinality}
\end{algorithm}

In the case of cardinality constraint, we can obtain a stronger bound than in the general case,  
by executing a different Algorithm~\ref{alg-offline-selection-cardinality} after receiving a coreset from Algorithms~\ref{alg-offline-coreset}.
Algorithm~\ref{alg-offline-selection-cardinality} is inspired by the \sieve algorithm in \citet{badanidiyuru2014streaming}, 
which makes multiple guesses at the threshold $\thr^*$ such that
$\thr^* \le \frac{f(\OPT)}{2\nI} \le (1+\Cgap) \thr^*$, 
and then builds a candidate solution for each guessed threshold in parallel.

Let us write $\I_\thr = \I_{i + 1}$ and $\I_\thr' = \I_{i + 1}'$, 
where $i = \max \{ j : f(v_j \mid \I_j) \ge \thr\}$ 
is the subset of the tentative solutions built in Algorithms~\ref{alg-offline-coreset} and~\ref{alg-offline-selection-cardinality} filtered by a threshold $\thr$.
The next step is to show that we do not miss in our coreset any feasible item~$v$ with marginal gain $f(v \mid \I_{\thr}) \ge \thr$ among items in \V.

\begin{lemma}
\label{lemma:gooditems}
Assume $\thr > 0$ and let $i = \max \{ j : f(v_j \mid \I_j) \ge \thr \}$.
Shorten $\I = \I_{i + 1}$.
Let $v$ be an item that can be added to $\I$ with a gain of at least $\thr$,
that is, $f(v \mid \I) \ge \thr$ and $\I + v \in \M$.
Let \R be the coreset returned by Algorithm~\ref{alg-offline-coreset}.
Then $v$ is a member of \R.  
\end{lemma}
\begin{proof}
If $\I$ is the last set in the loop of Algorithm~\ref{alg-offline-coreset}, then 
there is nothing to prove, as by definition
there are no items that can be added to $\I$ without violating $\M$.
Thus, we assume $\I$ is not the last set and $v_{i + 1}$ and $C_{i + 1}$ exist.

By definition of $i$, we have $f(v_{i + 1} \mid \I) < \thr$.
Since $f(v \mid \I) \ge \thr$ either $v \in C_{i + 1}$ or
$v$ has been removed earlier, that is $v \in C_j$ for $j \leq i$.
In either case, $v$ is in \R.  
\end{proof}

The solutions $\I'_\thr$ constructed by Algorithm~\ref{alg-offline-selection-cardinality} are of form $\I_i' \cup S$.
By Lemma~\ref{lemma:robust-G}, we already know that $f(\I_i' \cup S)$ is close to $f(\I_i \cup S)$ in expectation, so we are free to study
$\I_i \cup S$ instead of $\I_i' \cup S$.
Next we will bound the former using a thresholding argument. Similar arguments have been used by 
\citet{badanidiyuru2014streaming} and \citet{kazemi2018scalable}.
\begin{lemma}
\label{lemma:good-G}
For any $\thr$, write $S_\thr$ the additional items added to $\I'_\thr$ by Algorithm~\ref{alg-offline-selection-cardinality}.
Define $G_\thr = \I_\thr \cup S_\thr$.
Then
\[
f(G_\thr) \ge \min\left\{ f(\OPT) - \nI \thr, \nI \thr \right\}.
\]
\end{lemma}
\begin{proof}
For simplicity, let us shorten $G_\thr$, $G'_\thr$, $\I_\thr$, $\I_\thr'$, and $S_\thr$ with
$G$, $G'$, $\I$, $\I'$, and $S$, respectively.
We will prove the lemma by considering three cases.

Case 1: Assume $|\I| = \nI$. Then $f(G) \geq f(\I) \geq \nI \thr$.

Case 2: Assume $|G'| = \nI$. Then $f(G) \geq f(G') \geq \nI \thr$.

Case 3: Assume that $|\I| < \nI$ and $|G'| < \nI$.
The main idea is to show that every item in \OPT has a marginal gain less than \thr with respect to $G$.
Write
\begin{align*}
&f(\OPT) 
\le f(\OPT \cup G) 
\le f(G) + \sum_{v \in \OPT \setminus G} f(v \mid G) \\
&= f(G) + \sum_{v \in (\OPT \setminus G) \cap \R} f(v \mid G) + \sum_{v \in (\OPT \setminus G) \setminus \R} f(v \mid G) 
.
\end{align*}
We discuss the last two terms separately.

To bound the first term,
let $v \in (\OPT \setminus G) \cap \R$.
Since $\OPT \subseteq \V \setminus \D$, we have $v \in \R \setminus \D$.
If $f(v \mid G) \geq \tau$, then $v \in \I'$ or, since $|G'| < \nI$, the item $v$ is added to $S$ 
by Algorithm~\ref{alg-offline-selection-cardinality}. Thus, $f(v \mid G) < \tau$.

To bound the second term,
let $v \in (\OPT \setminus G) \setminus \R$.
Since $|\I| < \nI$, Lemma~\ref{lemma:gooditems} implies that $\thr > f(v \mid \I) \ge f(v \mid G)$.

Since $|\OPT| \leq \nI$, we have $f(\OPT) \le f(G) + \nI \thr$, proving the lemma.
\end{proof}

The next step is to show that there exists $\thr^* \in \thrs$ in thresholds \thrs enumerated by Algorithm~\ref{alg-offline-selection-cardinality} such that 
$\thr^* \le \frac{f(\OPT)}{2 \nI} \le (1+\Cgap) \thr^*$.
\begin{lemma}
\label{lemma:thr}
Let $\thrs$ be the set of thresholds enumerated by Algorithm~\ref{alg-offline-selection-cardinality}.
There exists $\thr^* \in \thrs$ such that $\thr^* \le \frac{f(\OPT)}{2 |\OPT|} \le (1+\Cgap) \thr^*$.
\end{lemma}
\begin{proof}
Let $\topv = \max_{v \in \R'} f(v)$ be the value of the top singleton in $\R'$.
Note that $R$ contains the top $\nD+1$ singletons as it contains
$\nD$ top singletons and $C_1$. Thus $\topv = \max_{v \in \V \setminus \D} f(v)$. 
Consequently, $\topv \le f(\OPT) \le |\OPT| \topv$.

Since $|\OPT| \le \nI$, 
$\frac{\OPT}{2|\OPT|}$ lies within the range $[\frac{\topv}{2\nI}, \topv]$.
An approximately close threshold $\thr^*$ can be found by enumeration in an exponential scale with a base $1+\Cgap$.
\end{proof}

Finally, we are ready to prove Theorem \ref{theorem:offline-cardinality}.
\begin{proof}[Proof of Theorem~\ref{theorem:offline-cardinality}]
Let $\tau^*$ be as given by Lemma~\ref{lemma:thr}.
Write~$G$ and $G'$ to be $G_{\tau^*}$ and $G'_{\tau^*}$
as given in Lemma~\ref{lemma:good-G}.

Lemmas~\ref{lemma:thr}~and~\ref{lemma:good-G} state that
\begin{align*}
	\expect[f(G)] & \geq \min \left\{ f(\OPT) - \nI \tau^*, \nI \tau^* \right\}  \\
	& \geq \min\left\{ f(\OPT) - f(\OPT) / 2, (1 - \Cgap) f(\OPT) / 2 \right\}   \\
	& \geq (1 - \Cgap) f(\OPT) / 2.
\end{align*}

Lemma~\ref{lemma:robust-G} states that
\begin{align*}
\expect[f(\ALG)]
 & \ge \expect[f(G')] \ge (1-\Cgap) \expect[f(G)] \\
 & \ge \frac{(1-\Cgap)^2}{2} f(\OPT) \ge \frac{1-2\Cgap}{2}f(\OPT),
\end{align*}
proving the approximation.

The bound on the coreset size is proved in the same way as in Theorem~\ref{theorem:offline-matroid}.
\end{proof}

\subsection{Further details of experiments}
\label{section:experiment-details}

Every algorithm returns the best between its solution and an additional greedy selection over the coreset after deletions.
The greedy selection possesses better quality most of the time.

In Figure~\ref{fig:results}, the error bar at each point is by three random runs.

In Section~\ref{section:experiment-movie}, 
A universe set of 9\,724 movies is expanded into a larger set of size 22\,046, 
by considering all movie-genre tuples.

\fi 

\end{document}